\newtheorem{theorem}{\hspace{-12pt} Theorem}[section]
\newtheorem{lemma}{ \hspace{-12pt}Lemma}[section]
\newtheorem{definition}{\hspace{-12pt}Definition}[section]
\newtheorem{example}{\hspace{-12pt}Example}[section]
\title{A Generic Inverted Index Framework for Similarity Search on the GPU \\ \LARGE{[Technical Report]}}
\author{ Jingbo Zhou$^{\dagger*}$,
Qi Guo$^{\dagger}$, H. V. Jagadish$^{\#}$, Lubo\v{s} Kr\v{c}\'{a}l$^{\dagger}$, Siyuan Liu$^{\ddag}$\thanks{$^{\ddag}$Siyuan Liu has done his work on the project as an intern at NUS.}
\\ Wenhao Luan$^{\dagger}$,  Anthony K. H. Tung$^{\dagger}$, Yueji Yang$^{\dagger\S}$,
Yuxin Zheng$^{\dagger}$$^\S$\\
$~^{\dagger}$National University of Singapore
$~^{\#}$Univ. of Michigan, Ann Arbor
$~^{\ddag}$Nanyang Technological University\\
$~^{*}$Business Intelligence Lab, Baidu Research $~^{\S}$Tencent Inc.\\
$^{\dagger}$\{jzhou, qiguo, krcal, luan1, atung, yueji, yuxin\}@comp.nus.edu.sg\\
 $~^{\#}$jag@umich.edu, $~^{\S}$sliu019@e.ntu.edu.sg
}
\begin{document}
\maketitle

\begin{abstract}
We propose a novel generic inverted index framework on the GPU (called GENIE), aiming to reduce the programming complexity of the GPU for parallel similarity search of different data types.  Not every data type and similarity measure are supported by GENIE, but many popular ones are. We present the system design of GENIE, and demonstrate similarity search with GENIE on several data types along with a theoretical analysis of search results. A new concept of locality sensitive hashing (LSH) named $\tau$-ANN search, and a novel data structure c-PQ on the GPU are also proposed for achieving this purpose.  Extensive experiments on different real-life datasets demonstrate the efficiency and effectiveness of our framework. The implemented system has been released as open source\footnote{https://github.com/SeSaMe-NUS/genie}.
\end{abstract}

\vspace{-1mm}
\section{Introduction}
\vspace{-1mm}

There is often a need to support a high throughput of queries on index structures at scale.
These queries could arise from a multiplicity of ``users'', both humans and client applications.
Even a single application could sometimes issue a large number of queries.  For example, image matching is
often done by extracting hundreds of high dimensional
 SIFT (scale-invariant feature transform) features and matching them against SIFT features in the database.
Parallelization for similarity search is required for high performance on modern hardware architectures \cite{luo2012parallel,wang2013efficient,ding2009using,pan2011fast}. 

Solutions may be implemented it on Graphics Processing Units (GPUs). GPUs have experienced a tremendous growth in terms of computational power and memory capacity in recent years. One advanced GPU in the consumer market, the Nvidia GTX Titan X, has 12 GB of DDR5 memory
at a price of 1000 US dollars while an advanced server class GPU, the Nvidia K80, has
24GB of DDR5 memory at a price of 5000 US dollars. 
Furthermore, most PCs allow two to four GPUs to be installed, bringing the total amount of GPU memory in a PC to be comparable with a regular CPU memory.



However, GPU programming is not easy.  Effectively exploiting parallelism is even harder, particularly as we worry about the unique features of the GPU including the Single-Instruction-Multiple-Data (SIMD) architecture, concurrent control, coherent branching and coalescing memory access. While capable programmers could take their index structure of choice and create a GPU-based parallel implementation, doing so will require considerable effort and skill.

Our goal is to address this parallel similarity search problem on the GPU in a generic fashion. To this end, we develop an efficient and parallelizable GPU-based \underline{Gen}eric \underline{I}nverted Ind\underline{e}x framework, called GENIE  (we also name our system as GENIE), for similarity search using an abstract {\em match-count model} we define.
 GENIE is designed
to support parallel computation of the match-count model, but the system is generic in that a wide variety of data types and similarity measures can be supported. 

We do not claim that every data type and similarity measure are supported by GENIE\footnote{Note that we named our system as ``generic inverted index'', but not ``general inverted index''.} -- just that many are, as we will demonstrate in the paper, including most that we have come across in practice. As an analogy, consider the map-reduce model, implemented in a software package like Hadoop.  Not all computations can be expressed in the map-reduce model, but many can.  For those that can, Hadoop takes care of parallelism and scaling out, greatly reducing the programmer's burden. In a similar way, our system, GENIE, can absorb the burden of parallel GPU-based implementation of similarity search methods and index structures. 

Our proposed match-count model defines the common operations on the generic inverted index framework which has enough flexibility to be instantiated for different data types.
The insight for this possibility is that many data types can be transformed into a form that can be searched by an inverted-index-like structure. Such transformation can be done by the \emph{Locality Sensitive Hashing} (LSH) \cite{charikar2002similarity,datar2004locality} scheme under several similarity measures  or by the \emph{Shotgun and Assembly} (SA) \cite{aparicio2002whole,she2004shotgun} scheme for complex structured data.
We present a detailed discussion of this in Section \ref{sec:overview_lstsa}.



The first challenge of GENIE is to design an efficient index architecture for the match-count model.
We propose an inverted index structure on the GPU which can divide the query processing to many small tasks to work in a fine-grained manner to fully utilize GPU's parallel computation power. GENIE also exploits GPU's properties like coalescing memory access and coherence branching during the index scanning.

We propose a novel data structure on the GPU, called Count Priority Queue (c-PQ for short), which can significantly reduce the time cost for similarity search. Due to the SIMD architecture, another challenge of GENIE is how to select the top-k candidates from the candidate set, which is widely considered a main bottleneck for similarity search on the GPU in previous study \cite{alabi2012fast,pan2011fast} (which is called k-selection in \cite{alabi2012fast} and short-list search in \cite{pan2011fast}). Existing methods usually adopt a sorting method which is an expensive operation, or a priority queue on the GPU which have warp divergence problem and irregular memory movement.
Our novel design of c-PQ can keep only a few candidates on a hash table on the GPU, and we only need to scan the hash table once to obtain the query result. Therefore this major bottleneck for similarity search on the GPU can be overcome.


We optimize data placement on the GPU to improve the throughput of GENIE. The novel structure of c-PQ can also reduce the memory requirement for multiple queries. Therefore GENIE can substantially increase the number of queries within a batch on the GPU. We propose a tailored hash table on the GPU to reduce hash confliction. Besides, to overcome the limited memory size of the GPU, we introduce a multiple loading strategy to process large data.

We describe how to process data using an LSH scheme by GENIE. 
We propose a new concept, called Tolerance-Approximate Nearest Neighbour ($\tau$-ANN) search, which is in the same spirit as the popular $c$-ANN search.
Then we prove that, GENIE can support the $\tau$-ANN search for any similarity measure that has a generic LSH scheme.

For complex data types without LSH transformation, another choice is to adopt the SA scheme to process the data. We will
showcase this by performing similarity search on sequence data, short document data and relational data using GENIE.



We summarize our contributions as follows:
\begin{itemize}
\item We propose a generic inverted index framework (GENIE) on the GPU, which can absorb the burden of parallel GPU-based implementation of similarity search for any data type that can be expressed in the match-count model.
\item We present the system design of GENIE. Especially, we devise the novel data structure c-PQ to significantly increase the throughput for query processing on the GPU.
\item We exhibit an approach to adopting LSH scheme for similarity search under GENIE. We propose the new concept of $\tau$-ANN, and demonstrate that GENIE can effectively support $\tau$-ANN search under the LSH scheme.
\item We showcase the similarity search on complex data structures by GENIE under the SA scheme.
\item We conduct comprehensive experiments on different types of real-life datasets to demonstrate the effectiveness and efficiency of GENIE.
\end{itemize}

The paper is organized as follows. We will present an overview in Section \ref{sec:overview}, and expound system design in Section \ref{sec:index}, then we will discuss similarity search on GENIE in Section \ref{sec:search} and Section \ref{sec:search_org}. We will discuss experiment result in Section \ref{sec:exp} and related work in Section \ref{sec:related}. We will conclude the paper in Section \ref{sec:conclusion}.

\section{Preliminaries and Overview}\label{sec:overview}
In this section, we give an overview of GENIE including its main concepts and computational framework. We use relational data shown in Figure \ref{fig:invIndex} as a running example. 

\subsection{Match-count model}
Given a universe $U$, an \textbf{object} $O_i$ contains a set of elements in $U$, i.e. $O_i =\{o_{i,1},...,o_{i,r}\} \subset U$. A set of such data objects forms a \textbf{data set} $DS=\{O_1,...O_n\}$.
A \textbf{query} $Q_i$ is a set of items $\{q_{i,1},...,q_{i,s}\}$, where each item $q_{i,j}$ is a set of elements from $U$, i.e. $q_{i,j}\subset U$ ($q_{i,j}$ is a subset of $U$). A \textbf{query set} is defined as $QS=\{Q_1,...,Q_m\}$.

\begin{figure}[htb]
\vspace{-0mm}
\centerline{
\includegraphics[width=0.45\textwidth]{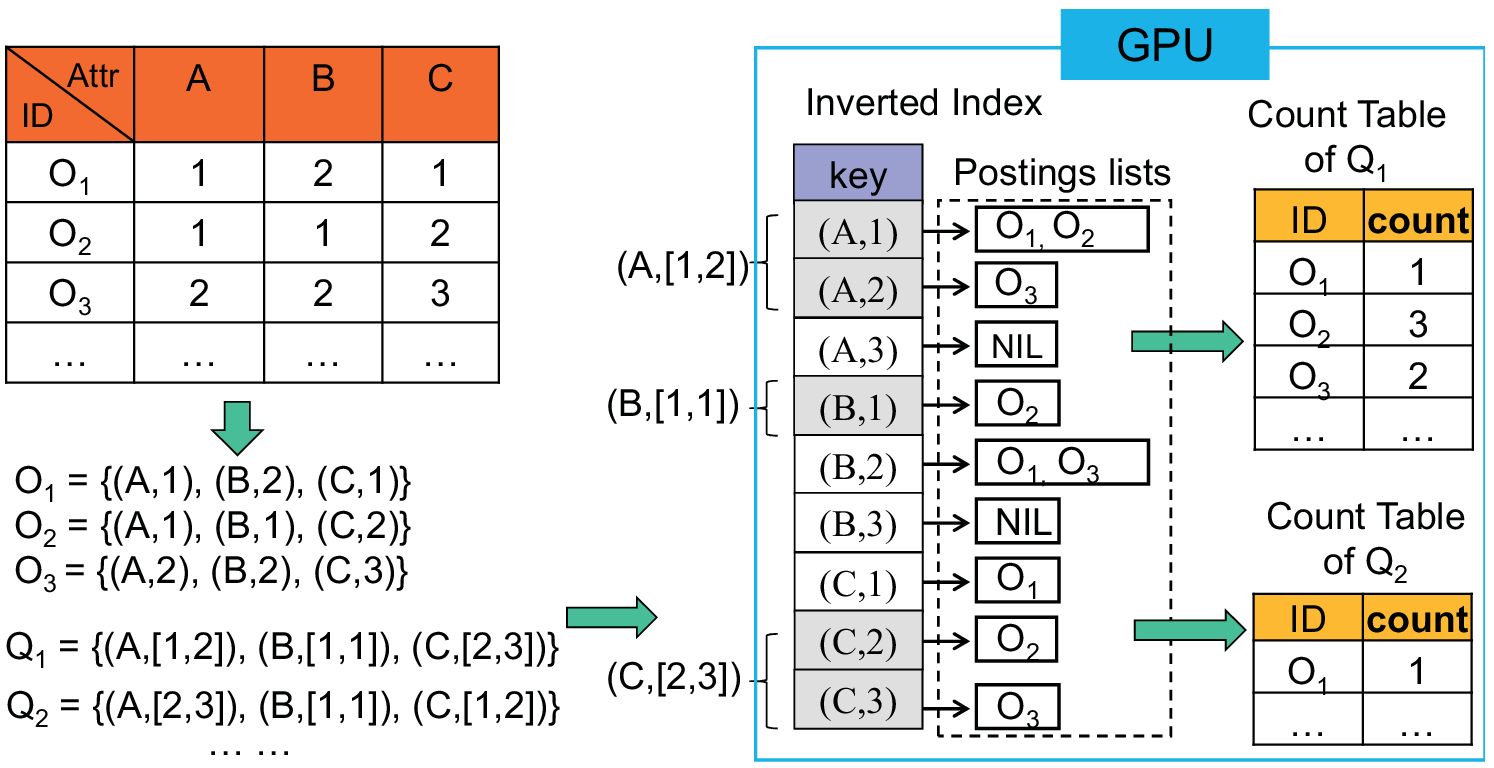}
}
\vspace{-0mm}
 \caption{An example on a relational table. 
 }
\label{fig:invIndex}
\vspace{-0mm}
\end{figure}


\begin{example}\label{exam:relation_tabel}
\emph{Given a relational table, the universe $U$ is a set of \emph{ordered pairs} $(d,v)$ where $d$ is an attribute of this table and $v$ is a value of this attribute. 
An $l-$dimensional relational tuple $p=(v_1,...,v_l)$ is represented as an object $O=\{(d_1,v_1),...,(d_l,v_l)\}$. As illustrated in Figure \ref{fig:invIndex}, the $O_1$ in the relational table is represented as $O_1=\{(A,1),(B,2),(C,1)\}$.}

\emph{A query on the relational table usually defines a set of ranges $R$$=$$([v^L_1,v^U_1]$,..., $[v^L_l,v^U_l])$.
Then it can be represented as $Q$=$\{r_1,r_2,..,r_l\}$, where $r_i=(d_i,[v^L_i,v^U_i])$ defines a set of pairs $(d_i,v)$ with value $v \in [v^L_i,v^U_i]$. As we can see from Figure \ref{fig:invIndex}, query $Q_1$ to retrieve the tuples with conditions $1\leq A\leq2, 1\leq B\leq1~and~2\leq C\leq3$ can be represented as $Q_1=\{(A,[1,2]),(B,[1,1]),(C,[2,3])\}$.}
\end{example}

Informally, given a query $Q$ and an object $O$, the match-count model $MC(\cdot,\cdot)$ returns the number of elements $o_{i}\in O$ contained by at least one query item of $Q$. We give a formal definition of the match-count model as follows.
\begin{definition}[match-count model]
\emph{Given a query $Q$ $=$ $\{r_1,r_2,..,r_l\}$ and an object $O=\{o_{1},...,o_{s}\}$, we map each query item $r_i$ to a natural integer using $C:(r_i,O) \rightarrow \mathbb{N}$, where $C(r_i,O)$ returns the number of elements $o_{j}\in O$ contained by the item $r_i$ (which is also a subset of $U$). Finally the output of the match-count model is the sum of the integers $MC(Q,O)=\sum_{r_i\in Q} C(r_i,O)$.
For example, in Figure \ref{fig:invIndex}, for $Q_1$ and $O_1$ we have $C((A,[1,2]),O_1)=1, C((B,[1,1]),O_1)=0~and~C((C,[2,3]),O_1)=0$, then we have $MC(Q_1,O_1)=1+0+0=1.$}
\end{definition}

In GENIE, we aim to rank all the objects in a data set with respect to the query $Q$ according to the model $MC(\cdot,\cdot)$ to obtain the top-$k$ objects of query $Q$.

GENIE essentially is an inverted index on the GPU to efficiently support the match-count model between objects and queries. The design of GENIE tries the best to utilize the GPU parallel computation capability to accelerate the query processing.  Figure \ref{fig:invIndex} shows an illustration of such high level inverted index. We first encode attributes and all possible values as ordered pairs (continuous valued attributes are first discretized).
Then we construct an inverted index where the \emph{keyword} is just the encoded pair and the \emph{postings list} comprises all objects having this keyword.
Given a query, we can quickly map each query item to the corresponding keywords (ordered pairs). After that, by scanning the postings lists, we can calculate the match counts between the query and all objects.

\subsection{GENIE with LSH and SA}\label{sec:overview_lstsa}

The inverted index with match-count model has the flexibility to support similarity search of many data types. 
Just as map-reduce model cannot handle all computation tasks, we do not expect that all data types can be supported. However, at least many popular data types can be supported with LSH or SA as we address further below. We illustrate the relationships among GENIE, LSH and SA in Fig. \ref{fig:lsh_genie_sa}. How to organize data structures as inverted indexes has been extensively investigated by previous literature \cite{wang2013efficient,Graph:Yan:SIGMOD:2005,yang2005similarity} and it is beyond the scope of this paper.

\subsubsection{Transformed by LSH.}
The most common data type, {high dimensional point}, can be transformed by an LSH scheme \cite{datar2004locality}. 
In such a scheme, multiple hash functions are used to hash data points into different buckets and points
that are frequently hashed to the same bucket are deemed to be similar. Hence we can build
an inverted index where each postings list corresponds to a list of points hashed to
a particular bucket. Given a query point, Approximate Nearest Neighbour (ANN) search can
be performed by first hashing the query point and then scanning
the corresponding postings list to retrieve data points that appear in many of these buckets.
Meanwhile, {sets, feature sketches and geometries} typically have kernelized similarity functions \cite{charikar2002similarity}, including Jaccard kernel for sets, 
Radial Basis Function (RBF) kernel for feature sketches, and Geodesic kernel for hyperplanes. 
We present the index building method under LSH scheme and theoretical analysis in Section \ref{sec:search}.
\vspace{-0mm}
\subsubsection{Transformed by SA}. 
The data with complex structure, including {documents, sequences, trees and graphs}, can be transformed with the SA \cite{aparicio2002whole,she2004shotgun} scheme. Specifically, the data will be broken down into smaller sub-units (``shotgun''), such as words for documents, 
n-grams for sequences \cite{wang2013efficient}, 
binary branches for trees \cite{yang2005similarity} and stars for graph  \cite{Graph:Yan:SIGMOD:2005}. After the decomposition, we can build an inverted index with a postings list for each unique sub-unit. Data objects containing a particular sub-unit are stored in the postings list. At query time, query objects will also be broken down into
a set of small sub-units and the corresponding postings lists will be accessed to find data objects that share  common sub-units with the query object. The match-count model returns the matching result between the query and objects sharing keywords, which is an important intermediate result for similarity search.
This approach has been widely used for similarity search of complex structured data \cite{wang2013efficient,yang2005similarity,Graph:Yan:SIGMOD:2005}. More discussion about it is presented in Section \ref{sec:search_org}.

\begin{figure}[htb]
\vspace{-0mm}
\centerline{
\includegraphics[width=0.35\textwidth]{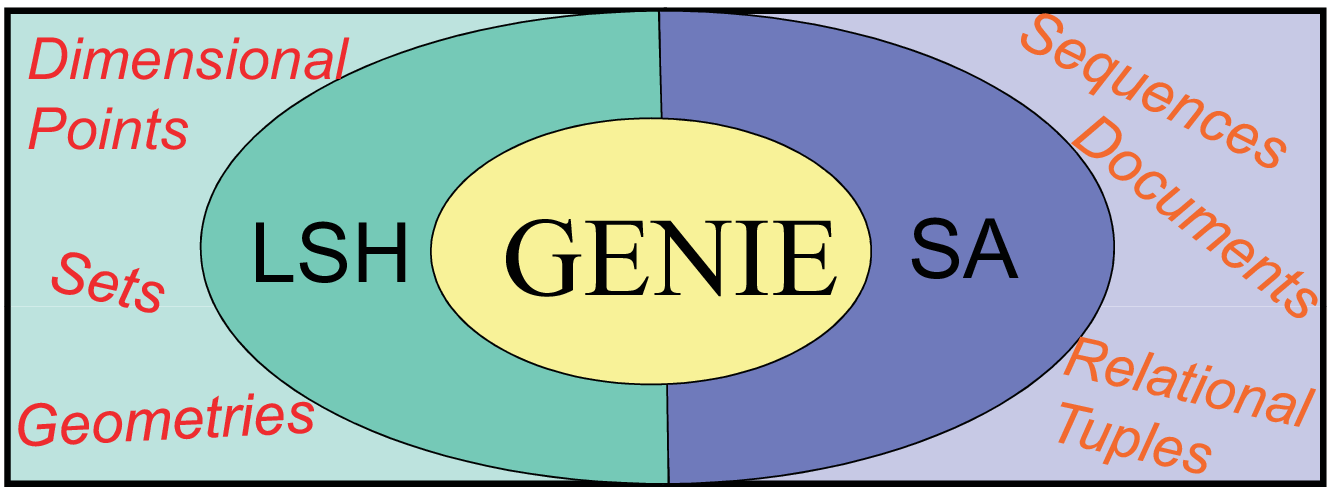}
}
\vspace{-0mm}
 \caption{  The relations among GENIE, LSH and SA.}
\label{fig:lsh_genie_sa}
\vspace{-0mm}
\end{figure}

\section{Inverted Index on the GPU}\label{sec:index}

We first give a brief introduction of the GPU, and then present the index structure and the data flow of GENIE. Next we present Count Priority Queue (c-PQ for short), which is a priority queue-like structure on the GPU memory facilitating the search. Finally, we propose a multiple loading method to handle large dataset. 

\subsection{Graphics Processing Unit}\label{sec:apx_gpu}
The Graphics Processing Unit (GPU) is a device that shares many aspects of Single-Instruction-Multiple-Data (SIMD) architecture. The GPU provides a massively parallel execution environment for many threads, with all of the threads running on multiple processing cores, and executing the same program on separate data. We implemented our system on an NVIDIA GPU using the Compute Unified Device Architecture (CUDA) toolkit \cite{cuda2015programming}. Each CUDA function is executed by an array of \emph{threads}. A small batch (e.g. 1024) of threads is organized as a \emph{block} that controls the cooperation among threads.

\subsection{Inverted index and query processing}\label{sec:index:gpuindex}

The inverted index is resident in the global memory of the GPU. Fig. \ref{fig:gpuIndex} illustrates an overview of such an index structure. All postings lists are stored in a large \emph{List Array} in the GPU's global memory. There is also a \emph{Position Map} in the CPU memory which stores starting and ending positions of each postings list for each keyword in the List Array. When processing queries, we use the Position Map to look up the corresponding postings list address for each keyword. This look-up operation is only required once for each query and our experiment also demonstrates that its time cost is negligible.

\begin{figure}[htb]
\vspace{-0mm}
\centerline{
\includegraphics[width=0.45\textwidth]{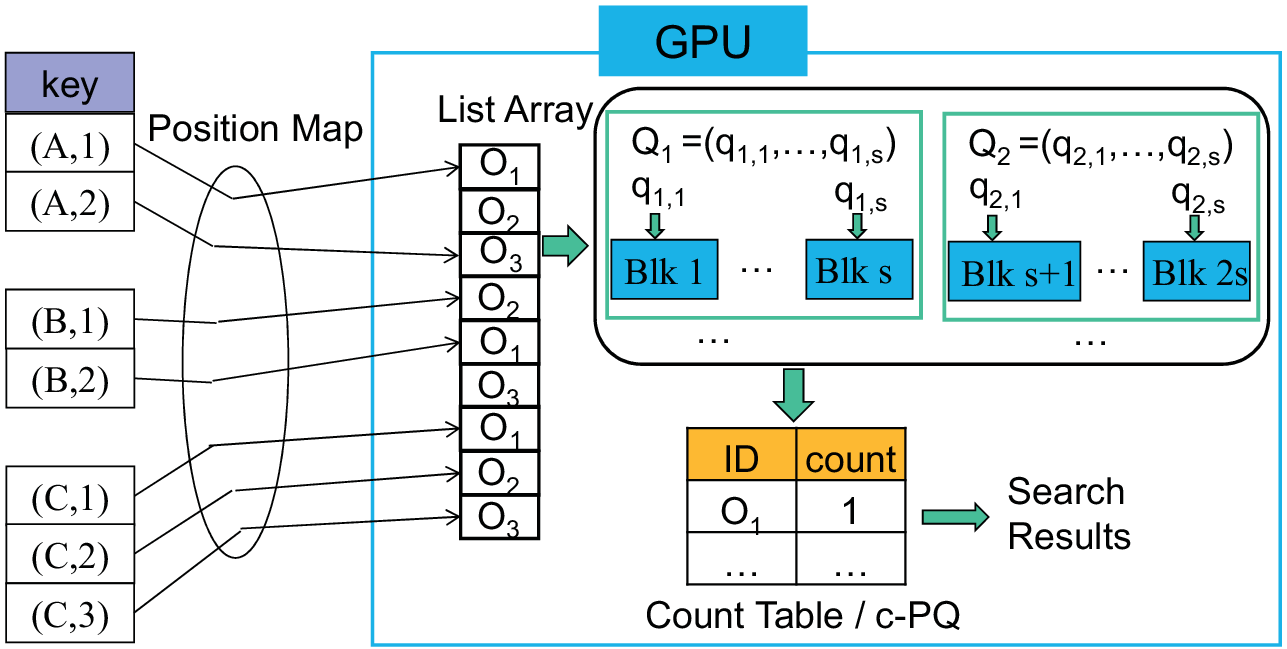}
}
\vspace{-0mm}
 \caption{  Overview of the inverted index and data flow.}
\label{fig:gpuIndex}
\vspace{-0mm}
\end{figure}

Figure \ref{fig:gpuIndex} shows the process of multiple queries on the GPU.  Each query has a set of items, which define particular ranges on some attributes. 
When we invoke a query, we first obtain its postings lists' addresses by the Position Map, then we use one block\footnote{A block on the GPU organizes a small batch of threads (up to 2048) and controls the cooperation among the threads.} of the GPU to scan the corresponding postings lists for each query item, where the threads of each block parallel access parts of the postings lists. For a query $Q_i=\{q_{i,1}, q_{i,2},...,q_{i,s}\}$ with $s$ query items, if there are $m$ queries, there will be about $m\cdot s$ blocks working on the GPU in parallel. During the process, after scanning an object in the postings list, each thread will update the \emph{Count Table} to update the number of occurrences of the object in the scanned postings lists. Therefore, the system works in a fine-grained manner to process multiple queries which fully utilizes the parallel computational capability of the GPU.

In the inverted index, there may be some extremely long postings lists, which can be bottleneck of our system. We also consider how to balance the workload for each block by breaking long postings lists into short sub-lists. 

\subsubsection{Load balance} \label{sec:load_balance}
\begin{figure}[htb]
\vspace{-0mm}
\centerline{
\includegraphics[width=0.4\textwidth]{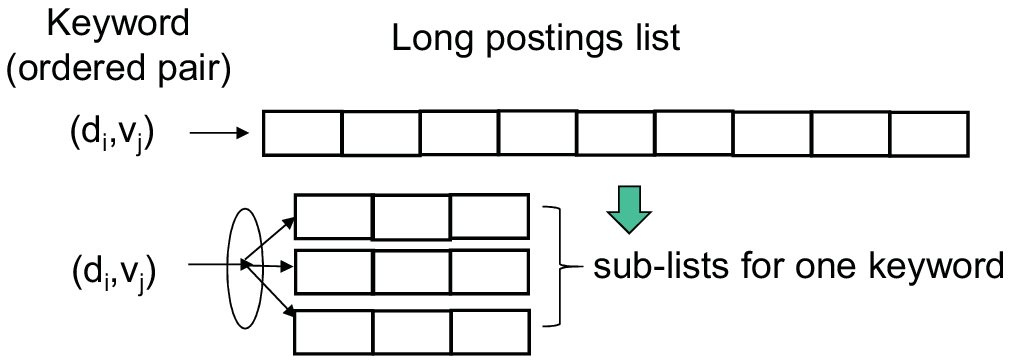}
}
\vspace{-0mm}
 \caption{Splitting long postings list for load balance.}
\label{fig:loadBalance}
\vspace{-0mm}
\end{figure}

For the inverted index, there may be some extreme long postings lists which can become the bottleneck of the system. Thus, it is necessary to consider the load balance problem in such application context. We also implement a load balance function, whose solution is to limit the length of postings lists. When the postings list is too long, we divide such a long postings list into a set of sublists, we build a one-to-many position map to store the addresses of the sub-lists in the List Array. Figure \ref{fig:loadBalance} gives an illustration for splitting a long postings list to three postings sub-lists. During scanning the (sub-)postings lists, we also limit the number of lists processed by one block. In our system, after enabling the load balance function, we limit the length of each (sub-)postings list as 4K and each block takes two (sub-)postings lists at most. It is worthwhile to note that, if there are already many queries running on the system, the usefulness of load balance is marginally decreased. This is because all the computing resources of the GPU have been utilized when there are many queries and the effect of load balance becomes neglected.

\subsection{Count Priority Queue}\label{sec:index:ch}
We propose a novel data structure, called Count Priority Queue (c-PQ for short) to replace the Count Table on the GPU, which aims to improve the efficiency and throughput of GENIE. c-PQ has two strong points: 1) Though how to retrieve the top-k result from all candidates is the major bottleneck for similarity search on the GPU \cite{pan2011fast}, c-PQ can finish this task with small cost; and 2) c-PQ can significantly reduce the space requirement of GENIE.



One major obstacle of GENIE is how to select top-$k$ count objects from the Count Table. This problem is also considered a major bottleneck for similarity search on the GPU in previous study \cite{pan2011fast}. It is desirable to use a priority queue for this problem. However, the parallel priority queue usually has warp divergence problem and irregular memory movement, which cannot run efficiently on GPU architectures \cite{he2012design}. 

The key idea of c-PQ is to use a two-level data structure to store the count results, and use a device to schedule the data allocation between levels. Our novel design can guarantee that only a few of candidates are stored in the upper level structure while all objects in lower level structure can be abandoned. Then we only need to scan the upper level structure to select the query result. We will describe the structure and mechanism of c-PQ, and prove all the claims in Theorem \ref{theorem:atk}.


Another major problem of GENIE is its large space cost, since the Count Table must allocate integer to store the count for each object for each query. Taking a dataset with 10M points as an example, if we want to submit a batch of one thousand queries, the required space of the Count Table is about 40 GB (by allocating one integer for count value, the size is $1k (queries) \times 10M (points) \times 4(bytes) =40 GB$), which exceeds the memory limit of the current available GPUs.  

To reduce space cost, first, we can use bitmap structure to avoid explicitly storing id.
Second, we only need to allocate several (instead of 32) bits to encode the count for each object in bitmap structure. The reason is that the maximum count for each object is bounded (i.e. there is a maximum value of the count) since the count value cannot be larger than the number of postings lists in the index. Actually, we usually can infer a much smaller count bound than the number of postings lists. For example, for high dimensional points, the maximum count value is just the number of its dimensions. 

\subsubsection{The structure and mechanism of c-PQ}\label{sec:sm_cpq}

\begin{figure}[htb]
\vspace{-0mm}
\centerline{
\includegraphics[width=0.45\textwidth]{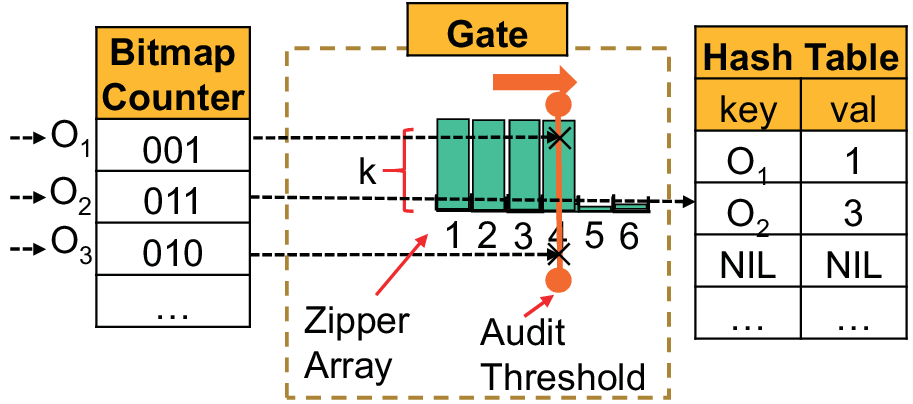}
}
\vspace{-0mm}
 \caption{  An illustration of the c-PQ.}
\label{fig:countHeap}
\vspace{-0mm}
\end{figure}

Figure \ref{fig:countHeap} shows the main structures of c-PQ which has three components. In the lower level, we create a \emph{Bitmap Counter} which allocates several bits (up to 32 bits) for each objects whose id is corresponding to the beginning address of the bits. In the upper level, there is a \emph{Hash Table} whose entry is a pair of object id and its count value. Then, a pivotal device, called \emph{Gate}, determines which id-value pair in the Bitmap Counter will be inserted into the Hash Table. The Gate has two members: a \emph{ZipperArray}  and a threshold called \emph{AuditThreshold}. In the following context, we briefly denote the Bitmap Counter as $BC$, the Hash Table as $HT$, the AuditThreshold as $AT$ and the ZipperArray as $ZA$.

The Gate has two functions. First, only a few objects in the BC can pass the Gate to the HT, while all objects remaining in the BC cannot be top $k$ objects and thus can be safely abandoned. Second, the AT in the Gate just keeps track of the threshold for the top-$k$ result, and we only need to scan the HT once to select the objects with counter larger than $AT-1$ as top-$k$ results (See Theorem \ref{theorem:atk}).

The ZA and AT in the Gate work together to restrict objects going from the BC to the HT. 
The size of the \emph{ZA} in Gate is equal to the maximum value of the count (based on the count value bound). \emph{ZA[i]}\footnote{ZA is 1-based indexing array, i.e. the index starts from 1.} records the minimum value between the number of objects whose count have reached $i$ (denoted as $zc_i$) and value $k$, i.e. $ZA[i]=min(zc_i,k)$. The \emph{AT} in Gate records the minimum index of ZA whose value is smaller than $k$ (i.e. $ZA[AT]<k$ and $ZA[AT-1]\geq k$).

The intuition behind the mechanism of the Gate is that, if there are already $k$ objects whose count has reached $i$ (i.e. $ZA[i]==k$) in the HT, there is no need to insert more objects whose count is less or equal to $i$ into the HT since there are already $k$ candidates if the top-$k$ count threshold is just $i$. Therefore, the $AT$ increase by 1 when $ZA[AT]==k$. 




We present the update process per thread on the GPU of c-PQ in Algorithm \ref{alg:countHeap}, which is also illustrated in Fig. \ref{fig:countHeap}. For each query, we use one block of the GPU to parallel process one query item. For all inverted lists matched by a query item, the threads of this block access the objects and update c-PQ with Algorithm \ref{alg:countHeap}. Note that the add operation is atomic.
When the count of an object is updated in the BC, we immediately check whether the object's count is larger than the \emph{AT} (line \ref{alg:countHeap:check}). If it is, we will insert (or update) an entry into the HT whose key is the object id and whose value is the object's count. 
Meanwhile, we will update the \emph{ZA} (line \ref{alg:countHeap:aa}). If \emph{ZA[AT]} is larger than $k$, we also increase the \emph{AT} by one unit (line \ref{alg:countHeap:zi}).

{\vspace{-0mm}
\begin{algorithm}[!htb]
\small
\DontPrintSemicolon
 \SetKwInOut{Input}{input}\SetKwInOut{Output}{output}

\tcp{For a thread in a block, it accesses object $O_i$ in the inverted index, then makes following updates.}

\ShowLn $val_i = BC[O_i]+1$

\ShowLn $BC[O_i] = val_i$

\If{$val_i\geq AT$}{\label{alg:countHeap:check}


        \ShowLn Put entry $(O_i,val_i)$ into the HT

        \ShowLn $ZA[val_i] += 1$ \label{alg:countHeap:aa}

        \While{ $ZA[AT] \geq k$}{\label{alg:countHeap:azk}
            \ShowLn $AT += 1$\label{alg:countHeap:zi}
        }
    }
 \caption{Update on the Count Priority Queue}\label{alg:countHeap}
 \vspace{-0mm}

\end{algorithm}
\vspace{-0mm}
}





For the Algorithm \ref{alg:countHeap} of c-PQ, we have the following lemma.
\begin{lemma}\label{lemma:aak}
\vspace{-0mm}
\emph{In Algorithm \ref{alg:countHeap}, after finishing all the updates of the Gate, we have $ZA[AT]<k$ and $ZA[AT-1]\geq k$.
}\vspace{-0mm}
\end{lemma}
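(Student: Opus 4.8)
The plan is to recognize the two inequalities as a loop invariant that the Gate maintains after \emph{every} atomic update, not merely at the very end; proving that a single update preserves the invariant then yields the claim for the terminal state by induction on the sequence of updates. Since the increments and the while-loop in Algorithm~\ref{alg:countHeap} are atomic, I would first fix a serialization of all the thread updates and reason about the resulting sequence of states. Write $P$ for the assertion ``$ZA[AT]<k$ and $ZA[AT-1]\geq k$'', adopting the sentinel convention $ZA[0]=k$ (equivalently, reading the second half of $P$ as vacuous when $AT=1$) so the boundary index is handled uniformly. For the base case, before any update $AT=1$ and $ZA[i]=0$ for all $i$, so $ZA[AT]=0<k$ and $ZA[AT-1]=ZA[0]=k\geq k$; thus $P$ holds.

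For the inductive step I would assume $P$ holds before a given update touching object $O_i$ with $val_i=BC[O_i]+1$, and split into three cases by comparing $val_i$ with the current $AT$. Two are easy: if $val_i<AT$ the test on line~\ref{alg:countHeap:check} fails, so neither $ZA$ nor $AT$ changes and $P$ is trivially preserved; if $val_i>AT$ only $ZA[val_i]$ is incremented, leaving $ZA[AT]$ and $ZA[AT-1]$ untouched, and since the loop at line~\ref{alg:countHeap:azk} inspects only $ZA[AT]$ (still $<k$ by hypothesis) it does not fire, so both halves of $P$ survive. Here I would first establish two structural facts: $AT$ is non-decreasing and each entry $ZA[\cdot]$ is only ever incremented, and $ZA[j]$ is \emph{frozen} once $AT$ passes index $j$ (because the guard $val_i\geq AT$ thereafter excludes $val_i=j$). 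These guarantee that entries below $AT$ retain the value that forced $AT$ past them, so $ZA[AT-1]\geq k$ is not silently broken by later updates.

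The crux is the case $val_i=AT$, where $ZA[AT]$ is incremented and the loop may advance $AT$. By hypothesis $ZA[AT]<k$ before the increment, so afterwards $ZA[AT]\leq k$. If it is still $<k$ the loop is skipped and $P$ is immediate; if it has reached $k$ the loop runs, and the key observation is that the loop does double duty: it terminates exactly when $ZA[AT]<k$ (the first half of $P$), while every index it stepped over satisfied the guard $ZA[\cdot]\geq k$, so in particular the last such index, the new $AT-1$, has $ZA[AT-1]\geq k$ (the second half). I expect the main obstacle to be the concurrency bookkeeping---justifying that the atomicity assumption lets us treat the parallel updates as a well-defined serial sequence, and confirming that no interleaving can leave $ZA[AT]$ raised above $k$ between the increment on line~\ref{alg:countHeap:aa} and the loop on line~\ref{alg:countHeap:azk}---together with termination of the loop (it cannot run past the largest count index, since $ZA$ is capped by $k$ and $AT$ only advances over entries equal to $k$). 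Once these are pinned down, chaining the single-step preservation over all updates gives $P$ for the terminal state, which is exactly the lemma.
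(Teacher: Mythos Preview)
Your proposal is correct and follows essentially the same reasoning as the paper: the while-loop at line~\ref{alg:countHeap:azk} exits only when $ZA[AT]<k$, and since $AT$ is incremented only when $ZA[AT]\geq k$, the entry just passed satisfies $ZA[AT-1]\geq k$. You have formalized this as an explicit loop invariant with an inductive case split and added concurrency and freezing considerations that the paper's two-sentence proof omits, but the underlying argument is identical.
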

\begin{proof}
\vspace{-0mm}
\emph{In  Algorithm \ref{alg:countHeap}, after each update of \emph{ZA} in line \ref{alg:countHeap:aa},  we check whether $ZA[AT] \geq k$ in line \ref{alg:countHeap:azk}. If it is, we increase $AT$ (in line \ref{alg:countHeap:zi}). Therefore, we always have   $ZA[AT]<k$. Similarly, since we only increase $AT$ in line \ref{alg:countHeap:zi}, we can guarantee that $ZA[AT-1]\geq k$.}
\vspace{-0mm}
\end{proof}

Now we use  Theorem \ref{theorem:atk} to elaborate the properties of c-PQ mentioned above. 
\begin{theorem}\label{theorem:atk}
\vspace{-0mm}
\emph{After finishing scanning the inverted index and updating c-PQ, the top-$k$ candidates are stored in the HT, and the number of objects in the HT is $O(k*AT)$.
Suppose the match count of the $k$-th object $O_k$ of a query $Q$ is $MC_k=MC(Q,O_k)$, then we have $MC_k = AT -1$.}
\vspace{-0mm}
\end{theorem}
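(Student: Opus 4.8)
The plan is to anchor all three assertions on a single structural fact about the terminal state of the \emph{ZipperArray}: that $ZA[i]=\min(zc_i,k)$ for every index $i$, where $zc_i$ is the \emph{true} number of objects whose count ever reaches $i$. The text introduces this as the intended meaning of $ZA$, but Algorithm \ref{alg:countHeap} only increments $ZA[val_i]$ when an object passes the Gate (when $val_i\geq AT$), so I would first verify that the gating does not corrupt these counts; the only property needed is that $AT$ is non-decreasing throughout the run, since it is modified only by line \ref{alg:countHeap:zi}. Fix an index $i$. If $i\geq AT$ at termination, then because $AT$ never decreases we had $AT\leq i$ at every moment, so every object reaching count $i$ satisfied $val_i=i\geq AT$ and was counted, giving $ZA[i]=zc_i$; moreover $zc_i\leq zc_{AT}=ZA[AT]<k$ by monotonicity of $zc_i$ and Lemma \ref{lemma:aak}, so $\min(zc_i,k)=zc_i=ZA[i]$. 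If instead $i<AT$ at termination, then $AT$ crossed $i$ at some point, which by the guard in line \ref{alg:countHeap:azk} happens exactly when $ZA[i]$ first reaches $k$; after that no object with $val_i=i$ can pass the Gate, so $ZA[i]$ freezes at $k$, and $zc_i\geq ZA[i]=k$. In both cases $ZA[i]=\min(zc_i,k)$, and in particular $ZA$ is non-increasing because $zc_i$ is.

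Next I would establish $MC_k=AT-1$. By definition of the $k$-th ranked match count, at least $k$ objects have count $\geq MC_k$ while at most $k-1$ have count $\geq MC_k+1$, i.e. $zc_{MC_k}\geq k$ and $zc_{MC_k+1}<k$. Feeding these into the structural fact gives $ZA[MC_k]=k$ and $ZA[MC_k+1]<k$. Since $ZA$ is non-increasing, the index set $\{i:ZA[i]<k\}$ is exactly $\{i: i\geq MC_k+1\}$, so $MC_k+1$ is the unique index satisfying both $ZA[MC_k+1]<k$ and $ZA[MC_k]\geq k$. Lemma \ref{lemma:aak} states that $AT$ satisfies precisely $ZA[AT]<k$ and $ZA[AT-1]\geq k$, so $AT=MC_k+1$, that is, $MC_k=AT-1$.

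For the first assertion I would show that a complete valid top-$k$ answer is recoverable from the $HT$ by taking the entries with recorded value $\geq AT-1=MC_k$. The clean part is objects with final count $\geq AT$: their last increment happens while $AT$ (being at most its final value) is $\leq$ that count, so they are (re)inserted with their correct final value; there are $zc_{AT}<k$ of them and each belongs to every top-$k$. The delicate part is the boundary value $MC_k=AT-1$, since an object reaching $MC_k$ only \emph{after} $AT$ has advanced to $MC_k+1$ is filtered out and need not appear in $HT$. Here I would use $ZA[MC_k]=k$: exactly $k$ objects were inserted upon reaching count $MC_k$ while $AT\leq MC_k$; at most $zc_{AT}$ of them later climbed strictly above $MC_k$, leaving at least $k-zc_{AT}$ objects recorded in $HT$ with value exactly $MC_k$. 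Together with the $zc_{AT}$ strictly-larger objects this yields $k$ distinct objects of count $\geq MC_k$, and since every object excluded from this set has count $\leq MC_k$, it is a legitimate top-$k$.

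Finally, for the size bound I would partition the objects of $HT$ into those all of whose insertions occur at values $\geq AT$ and those inserted at least once at a value $<AT$. The former necessarily reach count $\geq AT$, so there are at most $zc_{AT}<k$ of them. For the latter, each insertion at a value $v<AT$ increments $ZA[v]$ (which is never decremented), so the number of insertions at values below $AT$ equals $\sum_{v=1}^{AT-1}ZA[v]=(AT-1)k$, an upper bound on the count of distinct objects first inserted there. Summing gives at most $(AT-1)k+k=k\cdot AT$ objects in $HT$, i.e. $O(k\cdot AT)$. The main obstacle is the opening step: reconciling the gated, dynamically updated $ZA$ with its idealized value $\min(zc_i,k)$, because every subsequent count depends on $AT$ having advanced correctly, and all three assertions collapse onto that one invariant.
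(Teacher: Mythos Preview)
Your proof is correct and rests on the same two ingredients as the paper's---Lemma~\ref{lemma:aak} and the intended identity $ZA[i]=\min(zc_i,k)$---so the route is not genuinely different. The paper, however, treats that identity as a given (it is stated as the \emph{meaning} of $ZA$ in the text preceding Algorithm~\ref{alg:countHeap}) and then argues $MC_k=AT-1$ by a two-line contradiction against Lemma~\ref{lemma:aak}; the remaining claims are dispatched with the single sentence that ``at least $O(k)$ objects pass the gate for each possible value of $AT$''. Your write-up actually derives the $ZA$ invariant from the gated update rule, replaces the contradiction by a direct uniqueness argument via the monotonicity of $ZA$, and---most usefully---handles the boundary at $MC_k=AT-1$ explicitly, showing that even though some objects reaching count $MC_k$ after $AT$ has advanced are filtered out, the $k$ objects already recorded in $ZA[MC_k]$ still furnish a complete top-$k$ inside the $HT$. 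That boundary analysis and the explicit insertion count $\sum_{v<AT}ZA[v]=(AT-1)k$ for the size bound are genuine gaps in the paper's sketch that you have closed.
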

\begin{proof}
\emph{For the claim that ``the top-$k$ candidates are only stored in the HT (not in the BC)'', since there at least $O(k)$ objects passing the gate for each possible value of AT, therefore, the objects left in BC cannot be the top-$k$ candidates. For the same reason, the objects in the HT is $O(k*AT)$.}

\emph{For ``$MC_k = AT -1$'', we prove it by contradiction. On the one hand, if we suppose $MC_k > AT -1$, we can deduce that  $MC_k \geq AT$. Thus, we can further infer that $ZA[AT]\geq k$, which contradicts with Lemma \ref{lemma:aak}. On the other hand, if we suppose $MC_k < AT -1$, then there must be less than $k$ objects with match count greater than or equal to $AT-1$, which contradicts with  Lemma \ref{lemma:aak}. Therefore, we only have $MC_k = AT -1$.}
\end{proof}


According to Theorem \ref{theorem:atk}, we can select the top-$k$ objects by scanning the HT and selecting objects with match count greater than ($AT -1$) only. If there are multiple objects with match count equal to ($AT -1$), we break ties randomly.

We give an example to show update process of c-PQ with data in Figure \ref{fig:invIndex} and the final result shown in Figure \ref{fig:countHeap}.
\begin{example}
\emph{Given a data set $\{O_1, O_2, O_3\}$ and a query $Q_1$ in Figure \ref{fig:invIndex}, we want to find the top-1 result of $Q_1$ from the objects, i.e. $k=1$. Since the number of attributes of the table is $3$, the maximum value of count is $3$.  Initially we have $AT=1$, $ZA=[0,0,0]$, $BC=\{O_1:0, O_2:0, O_3:0\}$ and $HT=\emptyset$.
For easy explanation, we assume the postings lists matched by $Q_1$ are scanned with the order of $(A, [1,2]), (B,[1,1])$ and $(C,[2,3])$. (On the GPU they are processed with multiple blocks in parallel with random order.)}

\emph{As shown in Algorithm \ref{alg:countHeap}, when scanning the postings list $(A, [1,2])$, we first access $O_1$ and get $BC(O_1)=1$. Since $BC(O_1) \geq AT(=1)$, we have $HT(O_1)=1$ and $ZA[1]=0+1=1$ ( note that ZA is 1-based array, thus after the updating $ZA=[1,0,0]$). Since $ZA[AT] \geq k$($k=1$ and $AT=1$), then we have $AT = 1+1 =2$.
 Then we update $BC(O_2)=1$ and $BC(O_3)=1$ without changing other parameters since both the values of $O_2$ and $O_3$ are smaller than $AT$.}

\emph{With the same method, after processing $(B,[1,1])$, we only have $BC(O_2)=2$ , $HT(O_2)=2$, $ZA=[1,1,0]$, $AT=3$ and $BC=\{O_1:1, O_2:2, O_3:1\}$.}

\emph{The last postings list to process is $(C,[2,3])$. There is no $O_1$ in it. For $O_2$, we have $BC(O_2)=3$. Since $BC(O_2) \geq AT$, we have $HT(O_2)=3$, $ZA=[1,1,1]$ and $AT=4$. We also have $BC(O_3)=2$.}

\emph{Finally, we have $HT=\{O_1:1, O_2:3\}$ and $AT=4$. By Theorem \ref{theorem:atk}, we know that the count of top-1 result is 3 (AT-1=4-1=3). We can then scan the hash table $HT$ to select the object equal to $3$ which is just $O_2$.}
\end{example}

\subsubsection{Hash Table with modified Robin Hood Scheme}

Here we briefly discuss the design of Hash Table in c-PQ. 
We propose a modified Robin Hood Scheme to implement a hash table on the GPU which is different from existing work \cite{garcia2011coherent}. According to Theorem \ref{theorem:atk}, the size of the Hash Table can be set as $O(k*max\_count\_value)$. We adopt a lock-free synchronization mechanism which is studied in \cite{moazeni2012lock} to handle the race condition problem. 

The general idea of the Robin Hood hashing is to record the number of probes for reaching (inserting or accessing) an entry due to conflicts.
We refer to this probing number as age.
During the insertion, one entry will evict an existing entry in the probed location if the existing entry has a smaller age.
Then we repeat the process of inserting the evicted entry into the hash table until all the entries are inserted.
For a non-full table of size $T$, the expected maximum age per insertion or access is $\Theta(log_2 log T)$ \cite{devroye2004worst}.

The vital insight to improve the efficiency of the Robin Hood Scheme in the c-PQ is that all entries with values smaller than ($AT -1$) in the Hash Table cannot be top-$k$ candidates (see Theorem \ref{theorem:atk}). 
If the value of an entry is smaller than ($AT -1$), we can directly overwrite the entry regardless of hashing confliction. Thus we can significantly reduce the probe times of insertion of the HT, since many inserted keys become expired with the increase of \emph{AT}.  



\subsection{Indexing large data with multiple loadings}\label{sec:multiple_loadings}

We also devise a multiple loading method to increase the capacity of GENIE utilizing the advantage of the high GPU memory bandwidth. 
There may be some cases that the data index is too large to be fitted into the GPU memory. For this problem, we split the whole data set into several parts, and then build inverted index for each part in the CPU memory. When a batch of queries is submitted, we transfer each index part into the GPU memory in turn, and run the query processing introduced before. After finishing a round, we collect all results from each part, and merge them to get a final query results. Some necessary computation is done in the CPU such as finding the final top-$k$ results among the top-$k$ results of each data part. Fig. \ref{fig:multiple_loadings} illustrates this multiple loading method.

\begin{figure}[htb]
\vspace{-0mm}
\centerline{
\includegraphics[width=0.45\textwidth]{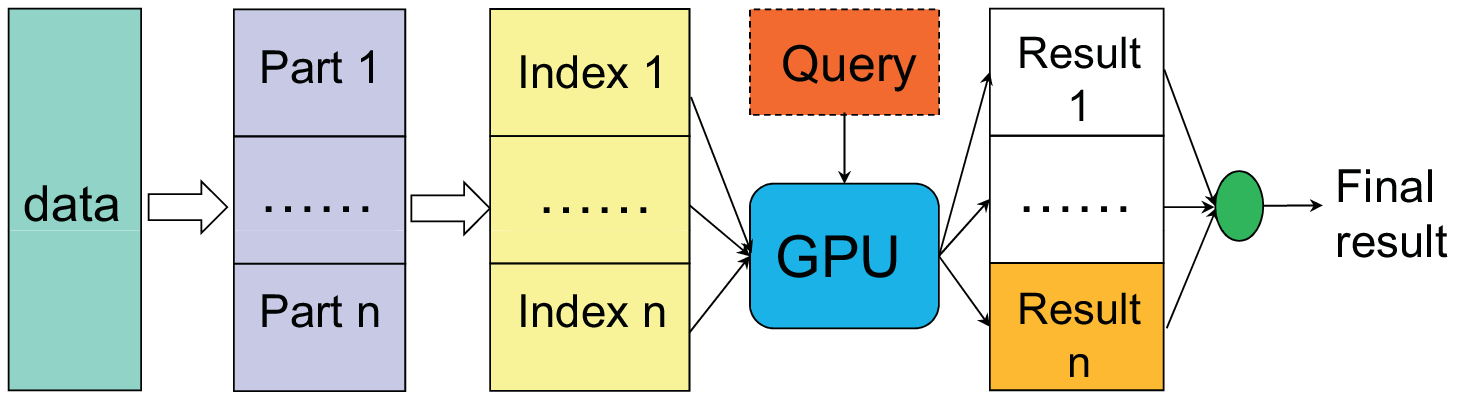}
}
\vspace{-0mm}
 \caption{  An illustration of GENIE with multiple loadings.}
\label{fig:multiple_loadings}
\vspace{-0mm}
\end{figure}

\subsection{The utility of the GPU for GENIE}\label{sec:multiple_loadings}
The design of GENIE utilizes the property of the SIMD architecture and the features of the GPU from several perspectives. First, GENIE divides the query process to sufficient number of small tasks to fully utilizes the parallel computation power of the GPU. We use one block to handle one query item and use each thread of the block to process an object in the postings list, therefore, the similarity search can be processed in as fine-grained manner as possible, so that all the processors on the GPU are highly utilized.  

Second, c-PQ can finish the top-k selection task with a small number of homogenous operations (scanning the Hash Table only once) suitable for the SIMD architecture. This data structure avoids the expensive operations like sort \cite{pan2011fast,alabi2012fast} and data dependent memory assess movement on the GPU \cite{he2012design} for top-k selection. This point is clarified in the experiment of Section \ref{exp:search_time} and Section \ref{sec:exp:cpq}, with a discussion about its competitors in Section \ref{exp:genie_discussion}.

Third, the design of GENIE tries to perform coalescing memory access and coherent branching.  In c-PQ, most operations are limited in the BC and only a few data are passed to the HT, which minimizes the branch divergence. Besides, since we use many threads to process a postings list, the threads have coalescing memory access patterns.

Fourth, the multiple loading method takes the advantage of the high GPU memory bandwidth which is usually 5-10 times higher than the CPU memory bandwidth.
Our experiment in Section \ref{sec:sift_multiple_loadings} also demonstrates that such index transfer step only takes a very small portion of the total time cost.

\section{Generic ANN Search with LSH }\label{sec:search}
Here we first show that GENIE can support the Approximate Nearest Neighbor (ANN) search for any similarity measure which has a generic LSH scheme, followed by an estimation of error bound for ANN search on GENIE.

\subsection{Building index for ANN search on GENIE}
In this section, we show how to use GENIE to support similarity search after processing data by LSH scheme.
\subsubsection{ANN search with LSH on GENIE}
According to the definition in \cite{charikar2002similarity}, a hashing function $h(\cdot)$ is said to be locality sensitive if it satisfies:
\begin{equation}\label{eqn:simlsh}
\vspace{-0mm}
Pr[h(p)=h(q)] = sim(p,q)
\vspace{-0mm}
\end{equation}
which means the collision probability is equal to the similarity measure. Here $sim(\cdot,\cdot)$ is a function that maps a pair of points to a number in $[0,1]$ where $sim(p,q)=1$ means $p$ and $q$ are identical. LSH is one of the most popular solutions for the ANN search problem \cite{indyk1998approximate,charikar2002similarity,wang2014hashing}. 


We can use the indexing method for relation table shown in Figure \ref{fig:invIndex} to build an inverted index for LSH. We treat each hash function as an attribute, and the hash signature as the value for each data point. The keyword in the inverted index for point $p$ under hash function $h_i(\cdot)$ is a pair $(i,h_i(p))$ and the postings list of pair $(i,h_i(p))$ is a set of points whose hash value by $h_i(\cdot)$ is $h_i(p)$ (i.e. $h_i(p')=h_i(p)$ if $p'$ and $p$ in the same postings list.).
Given a query point $q$, we also convert $q$ with the same transformation process, i.e. $Q=[h_1(q), h_2(q),...,h_m(q)]$. After building the inverted index, the query search according to the match-count model can be conveniently supported by GENIE.

As we will prove in Section \ref{sec:search:ann}, the top result returned by GENIE according to the match-count model is just the  ANN search result. Any similarity measures associated with an LSH family defined by Eqn. \ref{eqn:simlsh} can be supported by GENIE for ANN search. For ANN search in high dimensional space, we usually resort to $(r_1,r_2,\rho_1,\rho_2)$-sensitive hashing function family. We give a special discussion about it in Section \ref{sec:search:lpspace}. We will also analyze the error bound between the estimate $\hat{s}$ and the real similarity measure $s=sim(p,q)$ in Section \ref{sec:search:ann}.


\subsubsection{Re-hashing for LSH with large signature space}
A possible problem is that the hash signature of LSH functions may have a huge number of possible values with acceptable error by configuring the LSH parameters. For example, the signature of the Random Binning Hashing function introduced later can be thousands of bits by setting good parameters for search error. Meanwhile, it is not reasonable to discretize the hash signature into a set of buckets according to the definition of LSH in Eqn. \ref{eqn:simlsh}. 

To reduce the number of possible signatures introduced by LSH, we propose a re-hashing mechanism which is illustrated in Figure \ref{fig:rehashing}. After obtaining the LSH signature $h_i(\cdot)$, we further randomly project the signatures into a set of buckets with a random projection function $r_i(\cdot)$. We can convert a point to an object by the transformation: $O_i=[r_1(h_1(p_i)), r_2(h_2(p_i)),...,r_m(h_m(p_i))]$ where $h_j(\cdot)$ is an LSH function and $r_j(\cdot)$ is a random projection function. 
Note that re-hashing is not necessary if the signature space of selected LSH can be configured small enough.

\begin{figure}[htb]
\vspace{-0mm}
\centerline{
\includegraphics[width=0.35\textwidth]{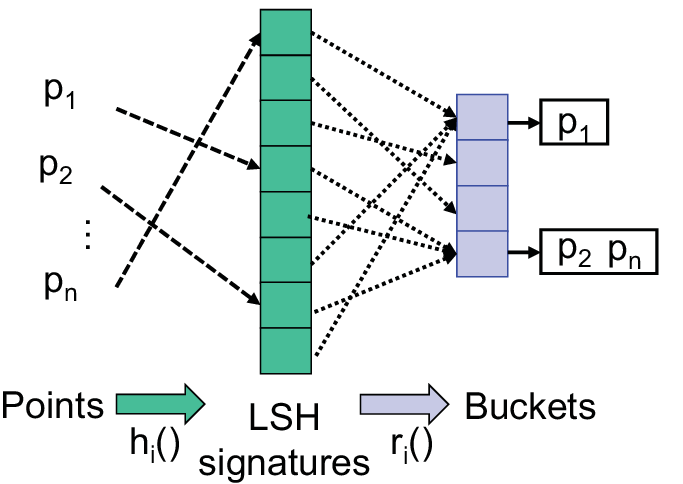}
}
\vspace{-0mm}
 \caption{  Re-hashing mechanism where $h(\cdot)$ is a LSH function and $r(\cdot)$ is a random projection function.}\label{fig:rehashing}
\vspace{-0mm}
\end{figure}

%

\subsubsection{Case study: ANN in Laplacian kernel space} \label{sec:search:case}


We take the ANN search on a shift-invariant kernel space as a case study, which has important applications for machine learning and computer vision. The authors in \cite{rahimi2007random} propose an LSH family, called Random Binning Hashing (RBH), for Laplacian kernel $k(p,q)=exp(-\parallel p-q\parallel_1/\sigma)$.

For any kernel function $k(\cdot)$ satisfying that $p(\sigma)=\sigma k''(\sigma)$ (where $k''(\cdot)$ is the second derivative of $k(\cdot)$) is a probability distribution function on $\sigma\geq 0$, we can construct an LSH family by random binning hashing (RBH) for kernel function $k(\cdot)$. For each RBH function $h(\cdot)$, we impose a randomly shift regular grid on the space with a grid cell size $g$ that is sampled from $p(\sigma)$, and a shift vector $u=[u^1,u^2,...,u^d]$ that is drawn uniformly from $[0,g]$. For a $d$-dimensional point $p=[p^1,p^2,...,p^d]$, the hash function is defined as:
\begin{equation}
h(p)=[\lfloor (p^1-u^1)/g\rfloor,...,\lfloor (p^d-u^d)/g\rfloor]
\end{equation}
The expected collision probability of RBH function is $Pr(h(p)=h(q)]=k(p,q)$. One typical kernel function satisfying the conditions for RBH is Laplacian kernel $k(p,q)=exp(-\parallel p-q\parallel_1/\sigma)$. We refer interested readers to \cite{rahimi2007random}.
For the re-hashing mechanism on GENIE, we select MurmurHashing3 \cite{appleby2015smhasher} as the random projection function.

Though this method is well-known for dimension reduction, as far as we know, it has not been applied to ANN search. One possible reason is that this method requires a huge hash signature space, where the number of bits required is a multiple of the number of dimensions of points.

In experiment, we demonstrate that GENIE can support ANN search in Laplacian kernel space based on RBH. To reduce the hash signature space, we use the re-hashing mechanism to project each signature into a finite set of buckets. 

\subsection{Theoretical analysis }\label{sec:search:ann}


To integrate existing LSH methods into GENIE requires a theoretical bound analysis for LSH under match-count model. For this purpose, we first propose a revised definition of the ANN search, called Tolerance-Approximate Nearest Neighbor search ($\tau$-ANN).
\begin{definition}[$\tau$-ANN]\label{def:tann}
\vspace{-0mm}
\emph{Given a set of $n$ points $P=\{p_1,p_2,..,p_n\}$ in a space $S$ under a similarity measure $sim(p_i,q)$, the Tolerance-Approximate Nearest Neighbor ($\tau$-ANN) search returns a point $p$ such that $|sim(p,q)-sim(p^*,q)|\leq \tau$ with high probability where $p^*$ is the true nearest neighbor.}
\vspace{-0mm}
\end{definition}
This concept is similar to the popular definition of $c$-ANN \cite{indyk1998approximate}.\footnote{$c$-ANN search  is defined as: to find a point $p$ so that $sim(p,q)\leq c\cdot sim(p^*,q)$  with high probability where $p^*$ is the true nearest neighbor.} Some existing works, like \cite{satuluri2012bayesian}, have also used a concept similar to Definition \ref{def:tann} though without explicit definition.

%


\subsubsection{Error bound and $\tau$-ANN}\label{sec:search:errbnd}
We prove that the top return of GENIE for a query $q$ is the $\tau$-ANN of $q$. Given a point $p$ and a query $q$ with a set of LSH functions $\mathbb{H}=\{h_1,h_2,...,h_m\}$, suppose there are $c$ functions in $\mathbb{H}$ satisfying $h_i(p)=h_i(q)$ (where $c$ is just the return of match-count model). We first prove in Theorem \ref{lemma:errBound} that 
the return of match-count model on GENIE can be probabilistically bounded w.r.t the similarity between $p$ and $q$, i.e. $|c/m-sim(p,q)|<\epsilon$ with high probability. 


\begin{theorem}\label{lemma:errBound}
\vspace{-0mm}
\emph{Given a similarity measure $sim(\cdot,\cdot)$, an LSH family $h(\cdot)$, we can get a new hash function $f(x)=r(h(x))$, where $r(\cdot)$ is a random projection function from LSH signature to a domain $R:U\rightarrow [0,D)$.}

\emph{For a set of hash functions $f_i(\cdot)=r_i(h_i(\cdot)),1\leq i\leq m$ with  $m=2\frac{ln (3/\delta)}{\epsilon^2}$, we can convert a point $p$ and a query $q$ to an object and a query of the match-count model, which are $O_p=[f_1(p),f_2(p),...,f_m(p)]$ and $Q_q=[f_1(q),f_2(q),...,f_m(q)]$, then we have $|MC(Q_q, O_p)/m-sim(p,q)|<\epsilon+1/D$ with probability at least $1-\delta$.}
\vspace{-0mm}
\end{theorem}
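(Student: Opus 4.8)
The plan is to view the match count as a sum of $m$ independent Bernoulli indicators and apply a concentration inequality, after first pinning down the expectation of each indicator. First I would set, for each $i$, the random variable $X_i = \mathbf{1}[f_i(p)=f_i(q)]$, so that $MC(Q_q,O_p)=\sum_{i=1}^m X_i$. The central computation is $E[X_i]$: a collision $f_i(p)=f_i(q)$ arises in exactly two mutually exclusive ways. Either the underlying LSH signatures already agree, $h_i(p)=h_i(q)$, which by the defining property~\eqref{eqn:simlsh} happens with probability $s:=sim(p,q)$; or the signatures disagree yet the random projection $r_i$ maps them to the same of the $D$ buckets, which (treating $r_i$ as an idealized uniform projection onto $[0,D)$) happens with probability $(1-s)\cdot\tfrac{1}{D}$. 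Since these two events partition the collision event, $E[X_i]=s+(1-s)/D=:\mu$.

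Next I would record the deterministic gap between $\mu$ and the target quantity $s$: because $0\le s\le 1$, we have $\mu-s=(1-s)/D\in[0,1/D]$, hence $|\mu-s|\le 1/D$. This is precisely the source of the additive $1/D$ term in the bound; it is the systematic bias introduced by re-hashing and, unlike the sampling error, it does not shrink as $m$ grows. Because the pairs $(h_i,r_i)$ are drawn independently across $i$, the $X_i$ are independent $\mathrm{Bernoulli}(\mu)$ variables, so I can apply a Chernoff/Hoeffding tail bound to the empirical mean $\bar X = MC(Q_q,O_p)/m$, obtaining $\Pr[\,|\bar X-\mu|\ge\epsilon\,]\le 3\exp(-m\epsilon^2/2)$ (a form implied by the sharper two-sided Hoeffding bound $2e^{-2m\epsilon^2}$). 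Substituting the stated $m=2\ln(3/\delta)/\epsilon^2$ makes this failure probability exactly $3e^{-\ln(3/\delta)}=\delta$.

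Finally I would combine the two estimates by the triangle inequality: on the event $|\bar X-\mu|<\epsilon$, which holds with probability at least $1-\delta$, we get $|MC(Q_q,O_p)/m-s|\le|\bar X-\mu|+|\mu-s|<\epsilon+1/D$, which is exactly the claim. I expect the main obstacle to lie in the expectation computation of the first step rather than in the concentration bound: one must justify that for two \emph{distinct} signatures the re-hashing collision probability is $1/D$ (which requires modeling $r_i$ as a uniform random map onto the $D$ buckets and assuming its independence from $h_i$), and that the genuine-collision and spurious-collision events really do partition the event $\{f_i(p)=f_i(q)\}$ without overlap so that the means add. Once $\mu$ is correctly identified, the concentration step is routine, and the only care needed there is to match the constant in the exponent to the prescribed value of $m$ so that the tail bound is indeed at most $\delta$.
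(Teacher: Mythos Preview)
Your argument is correct, and it is cleaner than the paper's own proof, which proceeds by a somewhat different decomposition. Instead of computing the single Bernoulli parameter $\mu=s+(1-s)/D$ and applying one concentration bound to $\bar X$, the paper splits the total count as $c=c_h+c_r$, where $c_h$ counts indices with $h_i(p)=h_i(q)$ and $c_r$ counts the remaining indices where $h_i(p)\neq h_i(q)$ but $r_i$ causes a collision. It then applies Hoeffding twice---once to show $|c_h/m-s|\le\epsilon/2$ with probability at least $1-2\delta/3$, and once (conditionally on $c_h$) to show $c_r/m\le 1/D+\epsilon/2$ with probability at least $1-\delta/3$---and combines the two events by a union bound.

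Your route is more economical: by folding both collision mechanisms into the mean $\mu$ up front, you need only one invocation of Hoeffding, and the triangle inequality cleanly separates the sampling error $\epsilon$ from the systematic re-hashing bias $1/D$. The paper's decomposition has to cope with the dependence of $c_r$ on $c_h$ (since $c_r$ is a binomial on the random number $m-c_h$ of non-colliding slots), which it handles by a slightly awkward conditional Hoeffding computation. On the other hand, the paper's split makes transparent \emph{which} collisions contribute the $1/D$ term, whereas in your version this is encoded in the algebraic step $\mu-s=(1-s)/D$. Both arguments rely on the same modeling assumption you flagged: that $r_i$ behaves as a uniform random map onto $D$ buckets, independent of $h_i$.
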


\begin{proof}
\emph{ The proof of Theorem \ref{lemma:errBound} is inspired by the routine of the proof of Lemma 4.2.2 in \cite{andoni2009nearest}. For convenience, let $c$ be the the return of match-count model $c=MC(O_p,Q_q)$, which essentially is the number of hash function $f_i(\cdot)$ such that $f_i(p)=f_i(q)$. The collisions of $f_i(\cdot)$ can be divided into two classes: one is caused by the collision of the LSH $h_i(\cdot)$ (since if $h_i(p)=h_i(q)$ then we must have $f_i(p)=f_i(q)$), the other one is caused by the collision of the random projection (meaning $h_i(p)\neq h_i(q)$ but $r_i(h_i(p))= r_i(h_i(q))$). Therefore, we further decompose count $c$ as $c=c_h+c_r$ where $c_h$ denotes the number of collisions of caused by $h_i(\cdot)$ and $c_r$ denotes the one caused by $r_i(\cdot)$ when  $h_i(p)\neq h_i(q)$.}

\emph{We first prove that that $|c_h/m-sim(p,q)|\leq \epsilon/2$ with probability at least $1-\delta/2$ given $Pr[h(p)=h(q)]=sim(p,q)$. This is deduced directly by Hoeffding's inequity if $m=2\frac{ln (3/\delta)}{\epsilon^2}$, which is}:
\begin{equation}
\vspace{-0mm}
Pr[|c_h/m-sim(p,q)|\geq \epsilon/2]\leq 2e^{-2m(\frac{\epsilon}{2})^2}=\frac{2\delta}{3}
\vspace{-0mm}
\end{equation}

\emph{For the rest of the $m-c_h$ hashing functions, we need to prove that the collision $c_r\leq (\omega+\epsilon/2)m$ with probability at least $1-\delta/3$, where $\omega$ is the collision probability of $r_i$. To simplify the following expression, we also denote $\epsilon/2$ by $\beta$, i.e. $\beta = \epsilon/2$ . Note that the expectation of $c_r$ is $E(c_r)=\omega(m-c_h)$. According to the Hoeffding's inequality, we have $Pr[c_r>(\omega+\beta)m]=Pr[c_r>(\omega+\frac{\omega c_h+\beta m}{m-c_h})(m-c_h)]\leq e^{-2(\frac{\omega c_h+\beta m}{m-c_h})^2(m-c_h)}$. We have $(\frac{\omega c_h+\beta m}{m-c_h})^2(m-c_h)\ge (\frac{\beta m}{m-c_h})^2(m-c_h)\ge \frac{\beta^2 m^2}{m-c_h}\ge \beta^2m$. Therefore, we have $e^{-2(\frac{\omega c_h+\beta m}{m-c_h})^2(m-c_h)}\leq e^{-2\beta^2m}$. Finally, we have $Pr[c_r/m>(\omega+\beta)]\leq e^{-2\beta^2m}$. Since $\beta=\epsilon/2$ and $m=2\frac{ln (3/\delta)}{\epsilon^2}$, we have $Pr[c_r/m>(\omega+\epsilon/2)]\leq \frac{\delta}{3}$. To combine above together, we have}:

\begin{align*}
\vspace{-0mm}
&Pr[|c_h/m+c_r/m-sim(p,q)|>\omega+\epsilon] \\
&\leq Pr[(|c_h/m-sim(p,q)|>\epsilon/2) \cup (c_r/m>\omega+\epsilon/2)]\\
&\leq Pr[(|c_h/m-sim(p,q)|>\epsilon/2)]+ Pr[(c_r/m>\omega+\epsilon/2)]\\
&\leq \delta
\vspace{-0mm}
\end{align*}

\emph{We also have $\omega=D*1/D^2=1/D$, therefore, $|MC(O_p,Q_q)-sim(x,y)|=|(c_h+c_r)/m-sim(p,q)|\leq \epsilon+1/D$ with probability at least $1-\delta$.}
\end{proof}

Now we introduce an important theorem which claims that, given a query point $q$ and proper configuration of $m$ stated in Theorem \ref{lemma:errBound}, the top result returned by GENIE is just the $\tau$-ANN of $q$.

\begin{theorem}\label{lemma:tann}
\vspace{-0mm}
\emph{Given a query $q$ and a set of points $P=\{p_1,p_2,..,p_n\}$, we can convert them to the objects of our match-count model by transformation $O_{p_i}$ $=$ $[r_1(h_1(p_i))$, \\ $r_2(h_2(p_i))$ $,...,$ $r_m(h_m(p_i))]$
which satisfies $|MC(Q_q, O_{p_i})/m-sim(p_i,q)|$ $\leq \epsilon$ with the probability at least $1-\delta$.  Suppose the true NN of $q$ is $p^*$, and the top result based on the match-count model is $p$, then we have $|sim(p^*,q)-sim(p,q)|\leq 2\epsilon$ with probability at least $1-2\delta$.}
\vspace{-0mm}
\end{theorem}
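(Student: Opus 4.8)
The plan is to reduce the statement to a triangle-inequality argument layered on top of the per-point error bound already supplied by the hypothesis (which is exactly Theorem \ref{lemma:errBound}). First I would fix notation: write $\hat{s}_i = MC(Q_q, O_{p_i})/m$ for the normalized match count and $s_i = sim(p_i,q)$ for the true similarity. Under this notation the point $p$ returned by GENIE is the maximizer of the match count, i.e. $\hat{s}_p \geq \hat{s}_{p_i}$ for every $i$, while the true nearest neighbour satisfies $s_{p^*} \geq s_{p_i}$ for every $i$. The hypothesis gives, for each individual point, $|\hat{s}_i - s_i| \leq \epsilon$ with probability at least $1-\delta$.

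The core step is to combine the per-point guarantee for the two relevant points $p^*$ and $p$ by a union bound, so that the events $|\hat{s}_{p^*}-s_{p^*}|\le\epsilon$ and $|\hat{s}_p - s_p|\le\epsilon$ hold simultaneously with probability at least $1-2\delta$. Conditioned on this joint event, I would expand the target gap telescopically as $s_{p^*}-s_p = (s_{p^*}-\hat{s}_{p^*}) + (\hat{s}_{p^*}-\hat{s}_p) + (\hat{s}_p - s_p)$. The first and third bracketed terms are each at most $\epsilon$ by the two concentration events, and the middle term is at most $0$ because $p$ maximizes the normalized match count, giving $\hat{s}_{p^*}-\hat{s}_p \le 0$. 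Adding these yields $s_{p^*}-s_p \le 2\epsilon$; since $p^*$ is the true nearest neighbour we also have $s_{p^*}-s_p \ge 0$, so $|s_{p^*}-s_p| \le 2\epsilon$, which is the claim.

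The step I expect to be the main obstacle is the legitimacy of applying the per-point bound to $p$: unlike $p^*$, the returned point $p$ is itself a function of the random hash and projection functions, so $|\hat{s}_p - s_p|\le\epsilon$ is not a statement about a fixed point, and the naive union bound over the pair $\{p^*,p\}$ is not immediately justified. I would address this either by (i) reading the claim as conditioning on the realized winner and invoking the bound for that fixed index, which is the interpretation the $1-2\delta$ figure implicitly assumes, or, for a fully rigorous version, by (ii) taking a union bound over all $n$ candidate points so that $|\hat{s}_i-s_i|\le\epsilon$ holds for every $i$ at once, at the cost of replacing $1-2\delta$ by $1-n\delta$; the telescoping inequality above then goes through verbatim because it only ever uses the bound at the two indices $p^*$ and $p$. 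I would also note that the $1/D$ slack appearing in Theorem \ref{lemma:errBound} has been folded into $\epsilon$ here, and that ties in the match count are broken arbitrarily without affecting the bound.
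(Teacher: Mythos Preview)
Your approach is essentially the same as the paper's: both combine the per-point concentration guarantee at $p^*$ and at $p$, then use $\hat{s}_p\ge \hat{s}_{p^*}$ together with $s_{p^*}\ge s_p$ to squeeze $s_{p^*}-s_p$ between $0$ and $2\epsilon$. Your telescoping decomposition is just a repackaging of the paper's two one-sided bounds $s^*\le c^*/m+\epsilon$ and $s\ge c/m-\epsilon$.

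Two minor differences are worth noting. First, where you invoke a union bound to get the joint event with probability at least $1-2\delta$, the paper instead multiplies the two probabilities as if the events were independent, obtaining $(1-\delta)^2\ge 1-2\delta$; since both estimates rely on the same family of hash functions this independence is not justified, so your union-bound version is the sounder one, and it reaches the same numerical conclusion. Second, the concern you raise about $p$ being data-dependent is real but is simply ignored in the paper; the paper implicitly adopts your option~(i), treating the realized winner as a fixed index and applying the $1-\delta$ bound to it directly. Your option~(ii) with the $n$-wise union bound is the rigorous fix, at the stated cost in the probability.
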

\begin{proof}
\emph{For convenience, we denote that the output count values of match-count model as $c=MC(O_p,Q_q)$ and $c^*=MC(O_{p^*},Q_q)$, and denote the real similarity measures as $s=sim(p,q)$ and $s^*=sim(p^*,q)$. We can get that}
\begin{align*}
\vspace{-0mm}
&Pr[|c/m-s|\leq \epsilon \cap |c^*/m-s^*|\leq \epsilon]\\
&=Pr[|c/m-s|\leq \epsilon]\cdot Pr[|c^*/m-s^*|\leq \epsilon]\\
&\geq (1-\delta)(1-\delta)\geq 1-2\delta
\vspace{-0mm}
\end{align*}

\emph{We also have that $c\geq c^*$ ($c$ is top result) and $s^*\geq s$ ($s^*$ is true NN). From $|c/m-s|\leq \epsilon$ and $|c^*/m-s^*|\leq \epsilon$, we can get that $s^*\leq c^*/m+\epsilon$ and $s\geq c/m-\epsilon$, which implies that $s^*-s\leq c^*/m+\epsilon-(c/m-\epsilon)\leq 2\epsilon$. To sum up, we can obtain that $Pr[sim(p^*,q)-sim(p,q)\leq 2\epsilon]\geq 1-2\delta$}
\end{proof}

\subsubsection{Number of hash functions in practice}
Theorem \ref{lemma:errBound} provides a rule to set the number of LSH functions 
as $O(\frac{1}{\epsilon^2})$ which may be very large. It is NOT a problem for GENIE to support such a number of hash functions since the GPU is a parallel architecture suitable for the massive quantity of simple tasks. The question however is that: Do we really need such a large number of hash functions in practical applications?

Before exploiting this, we first explain that the collision probability of a hash function $f_i(\cdot)$ can be approximated with the collision probability of an LSH function $h_i(\cdot)$ if $D$ is large enough. 
The collision probability of $f_i(\cdot)$ can be factorized as collisions caused by $h_i(\cdot)$ and collisions caused by $r_i(\cdot)$:
\begin{align}
&Pr[f_i(p)=f_i(q)]=Pr[r_i(h_i(p))=r_i(h_i(q))]\\
&\leq Pr[h_i(p)=h_i(q)] + Pr[r_i(h_i(p))=r_i(h_i(q))]\\
&=s+1/D
\end{align}
where $s=sim(p,q)$. Thus, we have $s\leq Pr[f_i(p)=f_i(q)]\leq s+1/D$. 
Suppose $r(\cdot)$ can re-hash $h_i(\cdot)$ into a very large domain $[0,D)$, we can claim that $Pr[f_i(p)=f_i(q)]\thickapprox s$. For simplicity, let us denote $c=MC(Q_q, O_p)$. An estimation of
$s$ by maximum likelihood estimation (MLE) can be \cite{satuluri2012bayesian}:
\begin{equation}\label{eqn:appro_mc}
s=MC(Q_q, O_p)/m=c/m
\end{equation}


Eqn. \ref{eqn:appro_mc} can be further justified by the following equation:
\begin{align}
Pr[|\frac{c}{m}-s|\leq \epsilon] &= Pr[(s-\epsilon)*m\leq c \leq (s+\epsilon)*m]\\
&=\sum_{c=\lfloor (s-\epsilon)m \rfloor}^{\lceil (s+\epsilon)m \rceil} {m\choose c}s^{c}(1-s)^{m-c}\label{eqn:err_hat_s}
\end{align}
Eqn. \ref{eqn:err_hat_s} shows that the probability of error bound depends on the similarity measure $s=sim(p,q)$  \cite{satuluri2012bayesian}. Therefore, there is no closed-form expression for such error bound.


Nevertheless, Eqn. \ref{eqn:err_hat_s} provides a practical solution to estimate a tighter error bound of the match-count model different from Theorem \ref{lemma:errBound}. If we fixed $\epsilon$ and $\delta$, for any given similarity measure $s$, we can infer the number of required hash functions $m$ subject to the constraint $Pr[|c/m-s|\leq \epsilon]\geq 1-\delta$ according to Eqn. \ref{eqn:err_hat_s}. Figure \ref{fig:s_vs_m} visualizes the number of minimum required LSH functions for different similarity measure with respect to a fixed parameter $\epsilon=\delta=0.06$ by this method.  A similar figure has also been illustrated in \cite{satuluri2012bayesian}. As we can see from Figure \ref{fig:s_vs_m}, the largest required number of hash functions, being m=237, appears at $s=0.5$,  which is much smaller than the one estimated by Theorem \ref{lemma:errBound} (which is $m=2\frac{ln (3/\delta)}{\epsilon^2}=2174$). We should note that the result shown in Figure \ref{fig:s_vs_m} is data independent. Thus, instead of using Theorem \ref{lemma:errBound}, we can effectively estimate the actually required number of LSH functions using the simulation result based on Eqn. \ref{eqn:err_hat_s} (like Figure \ref{fig:s_vs_m}).

\begin{figure}[htb]
\vspace{-0mm}
\centerline{
\includegraphics[width=0.4\textwidth]{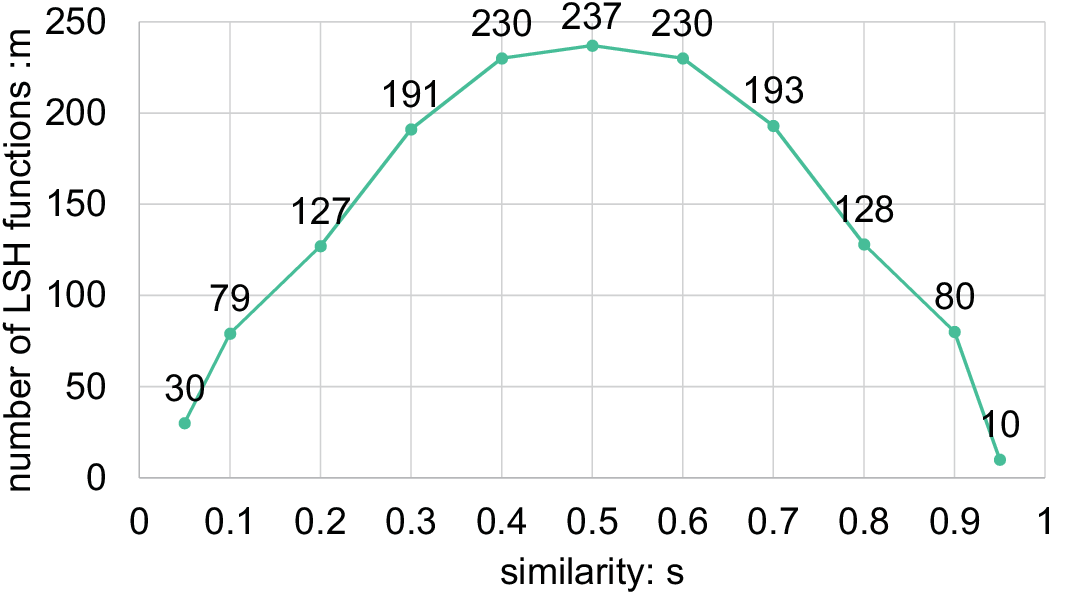}
}
\vspace{-0mm}
 \caption{  Similarity (s) v.s. the number of minimum required LSH functions (m) with constraint $Pr[|c/m-s|\leq \epsilon]\geq 1-\delta$ where $\epsilon=\delta=0.06$. 
 }
\label{fig:s_vs_m}
\vspace{-0mm}
\end{figure}

\subsubsection{ANN search in high dimensional space}\label{sec:search:lpspace}

For ANN search in high dimensional space, we usually resort to $(r_1,r_2,\rho_1,\rho_2)$-sensitive hashing function family which deserves special discussion. The LSH function family of $(r_1,r_2,\rho_1,\rho_2)$-sensitive hashing is usually defined as follows \cite{datar2004locality}:
\begin{definition}
\emph{In a $d$-dimensional $l_\mathfrak{p}$ norm space $R^d$, a function family $\mathbb{H}=\{h:R^d\rightarrow [0,D)\}$ is called $(r_1,r_2,\rho_1,\rho_2)$-sensitive if for any $p,q\in R^d$:}
\begin{itemize}\label{def:e2lsh}
\item if $\parallel p-q\parallel_\mathfrak{p}\leq r_1, then\ Pr[h(p)=h(q)]\geq \rho_1$
\item if $\parallel p-q\parallel_\mathfrak{p}\geq r_2, then\ Pr[h(p)=h(q)]\leq \rho_2$
\end{itemize}
\emph{where $r_1<r_2$ and $\rho_1>\rho_2$ and  $\parallel p-q\parallel_\mathfrak{p}$ is distance function in $l_\mathfrak{p}$ norm space.}
\end{definition}
Based on the $\mathfrak{p}$-stable distribution \cite{indyk2000stable}, an LSH function family in $l_\mathfrak{p}$ norm space can be defined as \cite{datar2004locality}:
\begin{equation}\label{eqn:plsh}
\vspace{-0mm}
h(q)=\lfloor \frac{\mathbf{a}^T\cdot q+b}{w}\rfloor
\vspace{-0mm}
\end{equation}
where $\mathbf{a}$ is a d-dimensional random vector whose entry is drawn independently from a $\mathfrak{p}$-stable distribution for $l_\mathfrak{p}$ distance function (e.g. Gaussian distribution for $l_\mathfrak{2}$ distance), and $b$ is a random real number drawn uniformly from $[0,w)$.

In order to integrate such an LSH function into our proposed match-count model for ANN search, we have to find the relation between the collision probability and the $l_\mathfrak{p}$ distance. For this purpose, we justify the LSH function of Eqn. \ref{eqn:plsh} by the following equation \cite{datar2004locality} (let $\Delta=\parallel p-q\parallel_\mathfrak{p})$:
\begin{equation}\label{eqn:psi}
\psi_\mathfrak{p}(\Delta)=Pr[h(p)=h(q)]=\int_0^w\frac{1}{\Delta}\phi_\mathfrak{p}(\frac{t}{\Delta})(1-\frac{t}{w})dt
\end{equation}
where $\phi_\mathfrak{p}(\Delta)$ denotes the probability distribution density function of the absolute value of the $\mathfrak{p}$-stable distribution. 

From Eqn. \ref{eqn:psi}, we can infer that $\psi_\mathfrak{p}(\Delta)$ is a strictly monotonically decreasing function \cite{datar2004locality}: If $p_1$ is more nearby to $q$ than $p_2$ in $l_p$ space ($\Delta_1=\parallel p_1-q\parallel_\mathfrak{p}< \Delta_2=\parallel p_2-q\parallel_\mathfrak{p}$), then $\psi_\mathfrak{p}(\Delta_1)$ is higher than $\psi_\mathfrak{p}(\Delta_2)$.
Therefore, we can say that $\psi_\mathfrak{p}(\Delta)$ defines a similarity measure between two points in  $l_p$ norm space, i.e.
\begin{equation}\label{eqn:simlp}
sim_{l_\mathfrak{p}}(p,q)=\psi_\mathfrak{p}(\Delta)=\psi_\mathfrak{p}(\parallel p-q||_\mathfrak{p}).
\end{equation}

Recalling Theorem \ref{lemma:errBound} and Theorem \ref{lemma:tann}, the only requirement for LSH functions of GENIE is to satisfy Eqn. \ref{eqn:simlsh}, which can be justified by Eqn. \ref{eqn:psi} for $(r_1,r_2,\rho_1,\rho_2)$-sensitive hashing function family. In other words, we can use GENIE to do $\tau$-ANN search under the similarity measure of Eqn. \ref{eqn:simlp}. 
Though the ANN search result is not measured by the $l_\mathfrak{p}$ norm distance, the returned result follows the same criterion to select the nearest neighbor since the similarity measure defined in Eqn.\ref{eqn:simlp} is closely related to the $l_p$ norm distance.



A similar counting method has also been used for $c$-ANN search in \cite{gan2012locality} where a Collision Counting LSH (C2LSH) scheme is proposed for $c$-ANN search. 
Though our method has different theoretical perspective from C2LSH, the basic idea behind them is similar: the more collision functions between points, the more likely that they would be near each other. From this view, the C2LSH can corroborate the effectiveness of the ANN search of GENIE.

\section{Searching on data with SA}\label{sec:search_org}

GENIE also provides a choice of adopting the ``Shotgun and Assembly'' (SA) scheme for similarity search. Given a dataset, we split each object into small units. Then we build inverted index where each unique unit is a keyword, and the corresponding postings list is a list of objects containing this unique unit. When a query comes, it is also broken down as a set of such small units. After that, GENIE can effectively calculate the number of common units between the query object and data objects. 

The return of match-count model can either be a similarity measure (e.g. document search where the count is just the inner product between the space vector of documents), or be considered as a lower bound of a distance (e.g. edit distance) to filter candidates \cite{wang2013efficient,yang2005similarity}.
We will demonstrate how to perform similarity search on sequence data, short document data and relational data using GENIE.

\subsection{Searching on sequence data}\label{sec:search:sequence}
In this section, we show how to use GENIE to support similarity search by the SA scheme with an example of sequence similarity search under edit distance. The general process is to first chop the data sequences and query sequences as $n$-grams using sliding windows, and then to build inverted index on the GPU with the $n$-grams as keywords. 

\subsubsection{Shotgun: decomposition and index}
We first decompose the sequence $S$ into a set of $n$-grams using a length-$n$ sliding window. Given a sequence $S$ and an integer $n$, the $n$-gram is a length-$n$ subsequence $s$ of $S$. Since the same $n$-gram may appear multiple times in a sequence, we introduce the \emph{ordered $n$-gram}, which is a pair ($n$-gram,$i$) where $i$ denotes the $i$-th same $n$-gram in the sequence. Therefore, we decompose the sequence $S$ into a set of ordered $n$-gram $G(S)$.  In GENIE, we build an inverted index by treating the ordered $n$-gram as a keyword and putting its sequence id in the postings list.

\begin{example}
\vspace{-0mm}
\emph{For a sequence $S=\{aabaab\}$, the set of \emph{ordered 3-grams} of $S$ is $G(S)=\{(aab,0),(aba,0),(baa,0),(aab,1)\}$ where $(aab,0)$ denotes the first subsequence $aab$ in $S$, and $(aab,1)$ denotes the second subsequence $aab$ in $S$.}
\vspace{-0mm}
\end{example}

\subsubsection{Assembly: combination and verification}
After building the index, we decompose a query sequence $Q$ into a set of \emph{ordered $n$-grams} using sliding windows, i.e. $G(Q)$. GENIE can then retrieve candidates with top-$k$ large count in the index. Before explaining this method, we first propose the following lemma:
\begin{lemma}
\vspace{-0mm}
\emph{Suppose the same $n$-gram $s_n^i$ appears $c_s^i$ times in sequence $S$ and $c_q^i$ times in sequence $Q$, then the result returned by the match-count model is $MC(G(S),G(Q))=\sum_{s_n^i}min\{c_s^i,c_q^i\}$.}
\vspace{-0mm}
\end{lemma}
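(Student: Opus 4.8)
The plan is to unfold the definition of the match-count model once the data and query sequences have been encoded as sets of ordered $n$-grams, and to show that the ordering device converts what is really a multiset (bag) intersection into the stated sum of minima. First I would recall that in the SA scheme each distinct ordered $n$-gram is a single keyword, and that $G(Q)$ is supplied to the match-count model as a collection of query items, one per ordered $n$-gram of $Q$. For a single query item $t$ (one ordered $n$-gram), the quantity $C(t, G(S))$ simply counts how many elements of the object $G(S)$ are contained in the singleton keyword $t$; since ordered $n$-grams are exact, distinct keywords, this is the indicator $\mathbf{1}[t \in G(S)]$. Hence
\begin{equation*}
MC(G(Q),G(S)) = \sum_{t \in G(Q)} \mathbf{1}[t \in G(S)],
\end{equation*}
so the whole problem reduces to counting the ordered $n$-grams common to $G(Q)$ and $G(S)$.

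The key step is to group this count by distinct underlying $n$-gram. I would fix a distinct $n$-gram $s_n^i$ and use the construction of ordered $n$-grams: the $c_s^i$ occurrences of $s_n^i$ in $S$ are exactly the keywords $(s_n^i,0),(s_n^i,1),\ldots,(s_n^i,c_s^i-1)$, and likewise the $c_q^i$ occurrences in $Q$ are $(s_n^i,0),\ldots,(s_n^i,c_q^i-1)$. An ordered $n$-gram $(s_n^i,j)$ therefore lies in both $G(Q)$ and $G(S)$ precisely when $0 \le j \le \min\{c_s^i,c_q^i\}-1$, giving $\min\{c_s^i,c_q^i\}$ common keywords for this $n$-gram. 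Summing the indicator contributions over all distinct $n$-grams $s_n^i$ then yields $MC(G(Q),G(S)) = \sum_{s_n^i}\min\{c_s^i,c_q^i\}$, which is the claim.

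I do not expect a serious technical obstacle here; the argument is essentially a careful bookkeeping of definitions. The one point that deserves emphasis — and where a naive encoding would fail — is \emph{why} the ordering index is needed: if plain $n$-grams were used as keywords, repeated occurrences could not be distinguished and the match count would not respect multiplicity. The ordered $n$-gram pairs align the $j$-th copy in the query with the $j$-th copy in the data, so that matches are capped at $\min\{c_s^i,c_q^i\}$ rather than overcounted. Making this alignment explicit is the conceptual crux, and it is exactly what the running example with $S=\{aabaab\}$ is meant to illustrate.
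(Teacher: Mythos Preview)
Your argument is correct and is exactly the natural unpacking of the definitions: reduce $MC(G(Q),G(S))$ to $|G(Q)\cap G(S)|$ via the indicator identity, then count the intersection one distinct $n$-gram at a time using the ordering index. The paper itself states this lemma without proof, so there is no alternative argument to compare against; your write-up is more than adequate as the missing justification.
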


 With respect to the edit distance, the result of the match-count model satisfies the following theorem \cite{sutinen1996filtration}.
\begin{theorem}\label{lemma:edit_bound}
\vspace{-0mm}
\emph{If the edit distance between $S$ and $Q$ is $\tau$, then the return of the match-count model has $MC(G(S),G(Q))\ge \max\{|Q|,|S|\}-n+1-\tau*n$. \cite{sutinen1996filtration}}
\vspace{-0mm}
\end{theorem}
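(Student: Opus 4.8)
The plan is to reduce the theorem to a purely combinatorial statement about multisets of $n$-grams and then bound how many $n$-grams a single edit operation can disturb. First I would invoke the preceding lemma, which identifies the match-count value with the size of the multiset intersection of the two $n$-gram collections: $MC(G(S),G(Q)) = \sum_{s_n^i} \min\{c_s^i, c_q^i\}$ counts exactly the number of $n$-grams common to $S$ and $Q$ \emph{with multiplicity}. Hence it suffices to show that this multiset intersection contains at least $\max\{|Q|,|S|\} - n + 1 - \tau\cdot n$ elements.

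Next I would fix notation. Without loss of generality assume $|S| \ge |Q|$, so $\max\{|Q|,|S|\} = |S|$, and recall that a length-$L$ string produces exactly $L-n+1$ ordered $n$-grams under the sliding-window decomposition; thus $|G(S)| = |S|-n+1 = \max\{|Q|,|S|\}-n+1$. The strategy is to lower-bound the intersection by $|G(S)|$ minus the number of $n$-grams of $S$ that fail to be matched inside $Q$. Writing $MC(G(S),G(Q)) = |G(S)| - (\text{\# unmatched $n$-grams of } S)$, the theorem follows once I show that the number of unmatched $n$-grams is at most $\tau\cdot n$.

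The core step is the combinatorial bound on an optimal edit script. I would fix a sequence of $\tau$ edit operations (insertions, deletions, substitutions) transforming $S$ into $Q$, and track which $n$-grams of $S$ are \emph{preserved}, i.e. survive as an identical length-$n$ window at the corresponding aligned position of $Q$. The key observation is that each single operation touches one character position, and a character participates in at most $n$ overlapping windows of length $n$; therefore one operation can destroy the preservation of at most $n$ $n$-grams. Summing over the $\tau$ operations, at most $\tau\cdot n$ of the $n$-grams of $S$ are disturbed, so at least $|G(S)| - \tau\cdot n$ are preserved. Since the alignment maps preserved positions of $S$ injectively to positions of $Q$, it sends preserved windows to distinct matching windows of $Q$, so every preserved $n$-gram is genuinely captured (with the correct multiplicity) by the $\min$-based intersection. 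This yields $MC(G(S),G(Q)) \ge |G(S)| - \tau\cdot n = \max\{|Q|,|S|\} - n + 1 - \tau\cdot n$.

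I expect the main obstacle to be the bookkeeping around multiplicities rather than the counting idea itself: I must argue carefully that distinct preserved windows of $S$, even when they coincide as strings, map through the alignment to distinct windows of $Q$, so that they are all counted by $\sum \min\{c_s^i, c_q^i\}$ rather than being collapsed. Making the notion of a preserved $n$-gram precise in the ordered-$n$-gram formalism, and verifying the ``at most $n$ windows per operation'' claim uniformly across insertions, deletions, and substitutions (including operations near the ends of the string), is where the care is needed; the underlying counting argument otherwise follows the classical $q$-gram filtration bound of \cite{sutinen1996filtration}.
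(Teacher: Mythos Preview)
Your argument is correct and is precisely the classical $q$-gram counting argument that underlies the cited result. The paper, however, does not actually carry out this argument: its ``proof'' consists of a pointer to \cite{sutinen1996filtration} and a remark that multiplicities are handled via $\min(c_s,c_q)$. So you are not taking a different route so much as making explicit what the paper leaves to the references. Your identification of the multiplicity bookkeeping as the delicate point is apt; the injectivity of the alignment on preserved windows is exactly what guarantees that distinct preserved $n$-grams of the longer string contribute to $\sum_i \min\{c_s^i,c_q^i\}$ without collapse, and that is the one place where a reader might otherwise stumble.
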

\begin{proof}
\emph{The observation between the count of shared $n$-gram and edit distance has been studied in \cite{sutinen1996filtration,gravano2001approximate}. Note that if an n-gram occurs $c_s$ times in $S$ and $c_q$ times in $Q$, we count the common $n$-gram number as $min(c_s,c_q)$ \cite{karkkainen2002computing}.}
\end{proof}


According to Theorem \ref{lemma:edit_bound}, we can use the result of the match-count model as an indicator for selecting candidates for the query sequence.  Our strategy is to retrieve $\mathbf{K}$ candidates from GENIE according to match count with setting a large $\mathbf{K} (\mathbf{K}>>k)$. Then  we can employ a verification process to calculate the edit distance between the query $Q$ and the $\mathbf{K}$ candidates to obtain the $k$-th most similar sequence $S^{\mathbf{k'}}$. The algorithm of the verification process is shown in Algorithm
\ref{alg:seq_verification}.

{\vspace{-0mm}
\begin{algorithm}[!htb]
\small
\DontPrintSemicolon
\SetKwInOut{Input}{Input}\SetKwInOut{Output}{Output}

\Input{$\mathbf{K}$ candidates of $Q$ returned by GENIE, with the descent order according to the count.}

\Output{The most similar sequence among $\mathbf{K}$ candidates}

\ShowLn $\tau_*=ed(Q,S^1)$ \tcp{$\tau^*$ is the smallest edit distance.}

\ShowLn $S^*=S^1$ \tcp{$S^*$ is the best candidate.}

\ShowLn $\theta =|Q|-n+1-n(\tau_*-1)$ \tcp{filtering bound}

\For{$2\leq j\leq \mathbf{K}$}{

\If{$\theta>MC(Q,S^j)$}{

\ShowLn break \tcp{We have $\tau_*<\tau_j$ by Theorem \ref{lemma:edit_bound}.}

\If{$|Q|-|S^j|>\tau_*$ \tcp{length filter}}
{

\ShowLn continue

}
\ShowLn $\tau_j=ed(Q,S^j)$

\If{$\tau_j<\tau_*$}{

\ShowLn $\tau_*=\tau_j$

\ShowLn $S^*=S^j$

\ShowLn $\theta=|Q|-n+1-n(\tau_*-1)$

}
}
}
 \caption{Verification algorithm for sequence search}\label{alg:seq_verification}

\end{algorithm}
 \vspace{-0mm}
}

With this method, we can know whether the real top-$k$ sequences are correctly returned by GENIE, though we cannot guarantee the returned top-$k$ candidates are the real top-$k$ data sequence for all queries. 
In other words, after the verification, we can know whether $S^{\mathbf{k'}}$ is the real $k$-th most similar sequence of Q accroding to the following theorem. 
\begin{theorem}\label{lemma:top-$k$}
\vspace{-0mm}
\emph{For the $\mathbf{K}$-th candidates $S^\mathbf{K}$ returned by GENIE according to count, suppose the match count between $S^\mathbf{K}$  and query $Q$ is $c_{\mathbf{K}}=MC(G(S^\mathbf{K}),G(Q))$. Among the $\mathbf{K}$ candidates, after employing the verification algorithm, we can obtain the edit distance between $k$-th most similar sequence (among the $\mathbf{K}$ candidates) and $Q$ is $\tau_{\mathbf{k'}}=ed(Q,S^{\mathbf{k'}})$. If $c_{\mathbf{K}}<|Q|-n+1-\tau_{\mathbf{k'}}*n$, then the real top-$k$ results are correctly returned by GENIE.}
\vspace{-0mm}
\end{theorem}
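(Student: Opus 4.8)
The plan is to show that, under the stated hypothesis, every database sequence that GENIE \emph{fails} to return is provably farther from $Q$ (in edit distance) than the verified $k$-th candidate $S^{\mathbf{k'}}$, so that the $k$ closest candidates found among the $\mathbf{K}$ returned sequences are in fact the global top-$k$. The whole argument hinges on Theorem \ref{lemma:edit_bound}, which turns a small edit distance into a guaranteed lower bound on the match count; I would use it in the contrapositive direction to rule out missed near-neighbours.

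First I would fix an arbitrary data sequence $S'$ that is \emph{not} among the $\mathbf{K}$ candidates returned by GENIE. Because GENIE ranks candidates by the match-count model and retains the $\mathbf{K}$ sequences of largest count (with $S^{\mathbf{K}}$ being the one of smallest count $c_{\mathbf{K}}$ among them), any sequence outside this pool must satisfy $MC(G(S'),G(Q)) \le c_{\mathbf{K}}$. This is the single place where the count-ranking property of the retrieval is invoked; ties at the value $c_{\mathbf{K}}$ are harmless here since the inequality is non-strict.

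Next I would apply Theorem \ref{lemma:edit_bound} to $S'$. Writing $\tau' = ed(Q,S')$, that bound gives $MC(G(S'),G(Q)) \ge \max\{|Q|,|S'|\} - n + 1 - \tau' n \ge |Q| - n + 1 - \tau' n$. Chaining this with the previous step and with the hypothesis $c_{\mathbf{K}} < |Q| - n + 1 - \tau_{\mathbf{k'}} n$ yields $|Q| - n + 1 - \tau' n \le c_{\mathbf{K}} < |Q| - n + 1 - \tau_{\mathbf{k'}} n$. Cancelling the common term $|Q|-n+1$ and the positive factor $n$ leaves $\tau' > \tau_{\mathbf{k'}}$, i.e. every sequence GENIE misses is strictly farther from $Q$ than the verified $k$-th candidate.

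Finally I would assemble the conclusion: the verification already exhibits $k$ candidates within the $\mathbf{K}$ pool whose edit distance is at most $\tau_{\mathbf{k'}}$, while the paragraph above shows no missed sequence can match or beat that distance, so these $k$ candidates are exactly the $k$ globally nearest sequences, which proves the claim. The part I would treat most carefully is the direction and strictness of the inequalities together with the boundary/tie case: the derivation delivers the \emph{strict} bound $\tau' > \tau_{\mathbf{k'}}$ for every missed sequence, so any ties at distance exactly $\tau_{\mathbf{k'}}$ necessarily lie inside the returned pool and can be resolved by a fixed tie-breaking rule without affecting correctness. The substitution $\max\{|Q|,|S'|\}\ge|Q|$ is the small technical move that makes the bound independent of the unknown length $|S'|$, and it is legitimate precisely because it only weakens the lower bound and hence preserves every inequality in the chain.
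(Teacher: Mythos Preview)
Your proposal is correct and follows essentially the same approach as the paper: both arguments apply Theorem~\ref{lemma:edit_bound}, weaken $\max\{|Q|,|S|\}$ to $|Q|$, and chain the resulting bound against the hypothesis $c_{\mathbf{K}}<|Q|-n+1-\tau_{\mathbf{k'}}n$. The only cosmetic difference is that the paper phrases it as a proof by contradiction (assume the true $k$-th neighbour $S^{k*}$ is missed, derive $c_{k^*}>c_{\mathbf{K}}$), whereas you argue directly that every missed sequence satisfies $\tau'>\tau_{\mathbf{k'}}$; the logical content is identical.
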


\begin{proof}
\emph{We prove it by contradiction. Suppose the real k-th similar sequence $S^{k*}$ cannot be found by GENIE. The edit distance between the query $Q$ and the real k-th similar sequence $S^{k*}$ is $\tau_{k^{*}}$. Then the match count between $S^{k*}$ and $Q$ satisfies: $c_{k^*}\geq max\{|Q|,|S^{k*}|\} -n+1-\tau_{k^*}*n$. We also have $\tau_{k^*}\leq \tau_{\mathbf{k'}}$, then we can get   $c_{k^*}\geq |Q|-n+1-\tau_{k^*}*n \geq |Q|-n+1-\tau_{\mathbf{k'}}*n >c_{\mathbf{K}}$. However, if $c_{k^*}>c_{\mathbf{K}}$, sequence $S^{k*}$ must have already been retrieved by GENIE. Therefore, the real k-th similar sequence $S^{k*}$ can be found by GENIE. In the same way, we can prove that all the top-$k$ results can be correctly returned by GENIE under the premise condition.}
\end{proof}

A possible solution for sequence similarity search is to repeat the search process by GENIE with larger $\mathbf{K}$, until the condition in Lemma \ref{lemma:top-$k$} is satisfied. In the worst case it may need to scan the whole data set before retrieving the real top-$k$ sequences under edit distance. 
However, as shown in our experiment, it can work well in practice for near edit distance similarity search in some applications. 


\subsection{Searching on short document data}\label{sec:search:string}

In this application, both query documents and object documents are broken down into ``words''. We build an inverted index with GENIE where the keyword is a ``word'' from the document, and the postings list is a list of document ids. 

We can explain the result returned by GENIE on short document data by the document vector space model. Documents can be represented by a binary vector space model where each word represents a separate dimension in the vector. If a word occurs in the document, its value in the vector is one, otherwise it is zero.  The output of the match-count model, which is the number of word co-occurring in both the query and the object,  is just the \emph{inner product} between the binary sparse vector of the query document and the one of the object document. 


\subsection{Searching on relational data}\label{sec:search:relation}
GENIE can also be used to support queries on relational data. In Fig. \ref{fig:invIndex}, we have shown how to build an inverted index for relational tuples. For attributes with continuous value, we assume that they can be discretized to an acceptable granularity level. A range selection query on a relational table is a set of specific ranges on attributes of the relational table. 

The top-$k$ result returned by GENIE on relational tables can be considered a special case of the traditional top-$k$ selection query. The top-$k$ selection query selects the $k$ tuples in a relational table with the largest predefined ranking score function $F(\cdot)$ (SQL \emph{ORDER BY}  $F(\cdot)$). 
In GENIE, we use a special ranking score function defined by the match-count model, which is especially useful for tables having both categorical and numerical attributes.

\section{Experiments}\label{sec:exp}

\subsection{Settings}

\subsubsection{Datasets}

We use five real-life datasets to evaluate our system. Each dataset corresponds to one similarity measure respectively introduced in Section \ref{sec:search} and Section \ref{sec:search_org}.

$\left[\textbf{OCR}\right]$\footnote{ \mbox{  \url{http://largescale.ml.tu-berlin.de/instructions/}} }
This is a dataset for optical character recognition. It contains 3.5M data points and each point has 1156 dimensions. The size of this dataset is 3.94 GB. We randomly select 10K points from the dataset as query/test set (and remove them from the dataset). We use RBH to generate the LSH signature, which is further re-hashed into an integer domain of [0,8192) (see Section \ref{sec:search:case}). 

$\left[\textbf{SIFT}\right]$\footnote{ \mbox{  \url{http://lear.inrialpes.fr/~jegou/data.php}} }
This dataset \cite{jegou:2008hamming} contains 4.5M SIFT features which are 128-dimensional points. Its total size is 1.49 GB. We randomly select 10K features as our query set and remove them from the dataset. We select the hash functions from the E2LSH family \cite{datar2004locality} and each function transforms a feature into 67 buckets. The setting of the bucket width follows the routine in \cite{datar2004locality}. We use this dataset to evaluate the ANN search in high dimensional space. 

$\left[\textbf{SIFT\_LARGE}\right]$ \footnote{ \mbox{  \url{http://image-net.org/challenges/LSVRC/2010/index#data}} } To evaluate the scalability of our system, we also extract 36 millions SIFT features by ourselves from the ILSVRC-2010 image dataset. The size of this dataset is 14.75 GB. We use the same method as described above for SIFT to process the data. 

$\left[\textbf{DBLP}\right]$\footnote{ \mbox{  \url{http://dblp.uni-trier.de/xml/}} }
 This dataset is obtained by extracting article titles from the DBLP website. The total number of sequences is 5.0M and the size of this dataset is 0.94 GB. We randomly choose 10K sequences as a test data, and then modify 20\% of the characters of the sequences. 
 This dataset is  to serve the experiment of sequence similarity search in Section \ref{sec:search:sequence}. Specially, 
 we set $\mathbf{K}=32$ and $k=1$ by default.

$\left[\textbf{Tweets}\right]$\footnote{ \mbox{  \url{https://dev.twitter.com/rest/public}} } This dataset has 6.8M tweets. We remove stop words from the tweets. The dataset is crawled by our collaborators from Twitter for three months by keeping the tweets containing a set of keywords.\footnote{   The keywords include ``Singapore'', ``City'', ``food joint'' and ``restaurant'', etc. It is crawled for a research project. } 
The data size is 0.46 GB.
We reserve 10K tweets as a query set. It is used to study the short document similarity search (see Section \ref{sec:search:string}).

$\left[\textbf{Adult}\right]$\footnote{ \mbox{  \url{http://archive.ics.uci.edu/ml/datasets/Adult}} }
This dataset has census information \cite{lichman2013} which contains 49K rows with 14 attributes (mixed of numerical and categorical ones). For numerical data, we discretize all value into 1024 intervals of equal width. We further duplicate every row 20 times. Thus, there are 0.98M instances (with size being 5.8 GB).  We select 10K tuples as queries. For numerical attributes, the query item range is defined as $[discretized\_value - 50, discretized\_value + 50]$. We use it to study the selection from relational data (see Section \ref{sec:search:relation}). 

\subsubsection{Competitors}\label{sec:exp:baseline}
We use the following competitors as baselines to evaluate the performance of GENIE.

$\left[\textbf{GPU-LSH}\right]$
We use GPU-LSH \cite{pan2011fast,pan2012bi} as a competitor of GENIE for ANN search in high dimensional space  and its source code is publicly available\footnote{\mbox{  \url{http://gamma.cs.unc.edu/KNN/}}}. Furthermore, since there is no GPU-based LSH method for ANN search in Laplacian kernel space, we still use GPU-LSH method as a competitor for ANN search of GENIE. 
We configure the parameters of GPU-LSH to make sure its ANN search results have similar quality as GENIE, which is discussed in Section \ref{sec:exp:ann}.
We only use 1M data points for GPU-LSH on OCR dataset since GPU-LSH cannot afford more data points.

$\left[\textbf{GPU-SPQ}\right]$
We implemented a priority queue-like method on GPU as a competitor. We first scan the whole dataset to compute match-count values between queries and all points, and store these computed results in an array. Then we use a GPU-based fast k-selection \cite{alabi2012fast} method to extract the top-$k$ candidates from the array for each query. We name this top-$k$ calculation method as  SPQ (which denotes GPU fast k-\underline{s}election from an array as a \underline{p}riority \underline{q}ueue).
We give an introduction to SPQ in Appendix \ref{sec:kselection}.
Note that for ANN search, we scan on the LSH signatures (not original data). 

$\left[\textbf{CPU-Idx}\right]$
We implemented an inverted index on the CPU memory. While accessing the inverted index in memory, we use an array to record the value of match-count model for each object. Then we use a partial quick selection function (with $\Theta(n + k log n)$ worst-case performance) in C++ STL to get the $k$ largest-count candidate objects. 

$\left[\textbf{CPU-LSH}\right]$
We use CPU-LSH \cite{gan2012locality} for ANN search in high dimensional space as a competitor after obtaining its source code from authors' website\footnote{\mbox{  \url{http://ss.sysu.edu.cn/~fjl/c2lsh/C2LSH_Source_Code.tar.gz}}}. We use the suggestion method in the paper to determine the parameters. 

$\left[\textbf{AppGram}\right]$ This is one of the state-of-the-art methods for sequence similarity search under edit distance on the CPU \cite{wang2013efficient}. We use AppGram as a baseline for comparing the running time of GENIE for sequence similarity search. Note that AppGram and GENIE are not completely comparable, since AppGram tries its best to find the true kNNs, while GENIE only does one round search process in the experiment. Thus some true kNNs may be missed (though we know which queries do not have true top-$k$, and another search process can be issued to explore the true kNNs). We give more discussion about this in Section \ref{sec:exp:seq_search}.

$\left[\textbf{GEN-SPQ}\right]$
This is a variant of GENIE but using SPQ instead of c-PQ. We still build inverted index on the GPU for each dataset. 
However, instead of using c-PQ (see Section \ref{sec:index:ch}), we use SPQ (which is the same with the one for GPU-SPQ) to extract candidates from the Count Table. 

\subsubsection{Environment}
We conducted the experiments on a CPU-GPU platform. The GPU used is NVIDIA GeForce GTX TITAN X with 12 GB memory. GPU codes were implemented with CUDA 7.  Other programs were in C++ on CentOS 6.5 server (with 64 GB RAM). The CPU used is Intel Core i7-3820.

Unless otherwise specified, we set $k=100$ 
and set the submitted query number per batch to the GPU as 1024. All the reported results are the average of running results of ten times. By default, we do not enable the load balance function since it is not necessary when the query number is large for one batch process. We also present the experiment about load balance in Section \ref{sec:load_balance}. 
For ANN search, we use the method introduced in Section \ref{sec:search:errbnd} to determine the number of LSH hash functions with setting $\epsilon=\delta=0.06$, therefore the number of hash functions is $m=237$. 

\subsection{Efficiency of GENIE}\label{exp:genie_efficiency}
\subsubsection{Search time for multiple queries}\label{exp:search_time}
We compare the running time among GENIE and its competitors. We do not include index building time for all competitors since index building can be done offline. The index building time of GENIE is discussed in Section \ref{exp:timeprofile}.


\begin{figure*}[tb]
\centering
\begin{tabular}{ccccc}
  \hspace{-5mm}\includegraphics[width=0.2\textwidth]{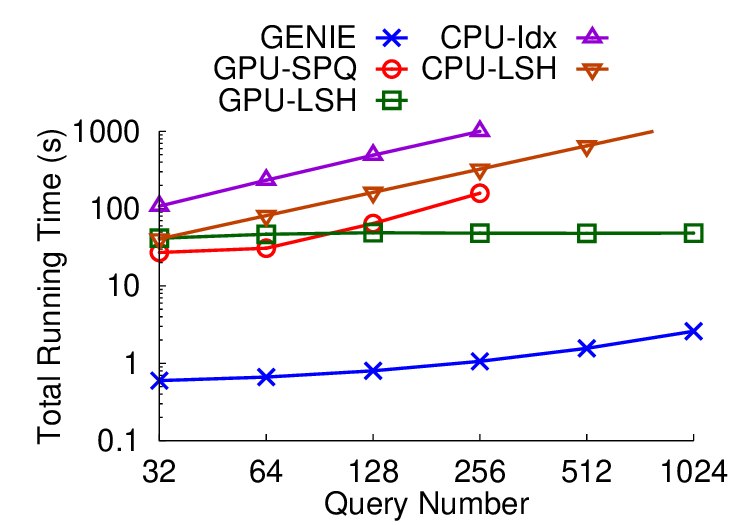} &
     \hspace{-5mm}\includegraphics[width=0.2\textwidth]{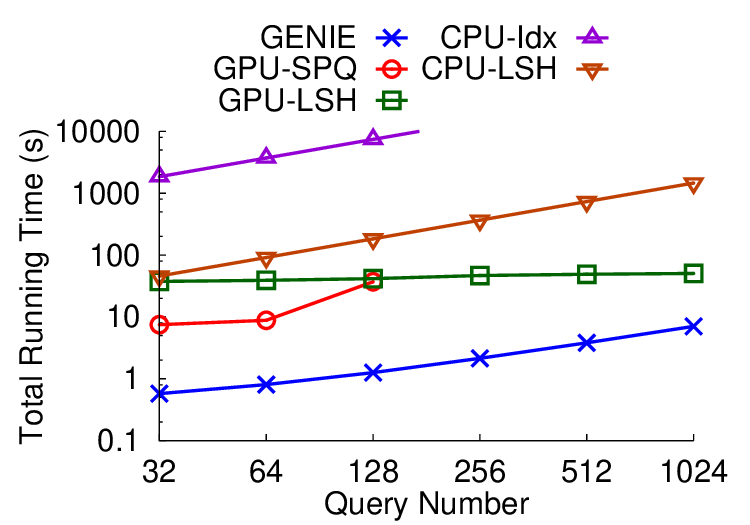}  &
      \hspace{-5mm}\includegraphics[width=0.2\textwidth]{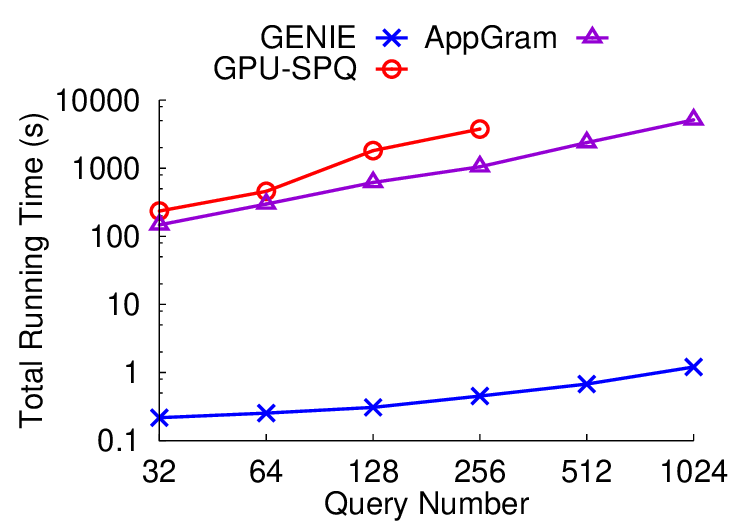} &
      \hspace{-5mm}\includegraphics[width=0.2\textwidth]{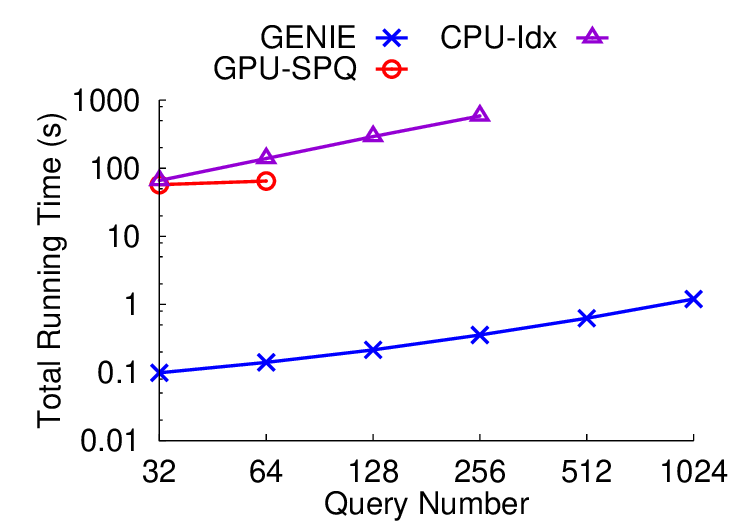}  &
     \hspace{-5mm}\includegraphics[width=0.2\textwidth]{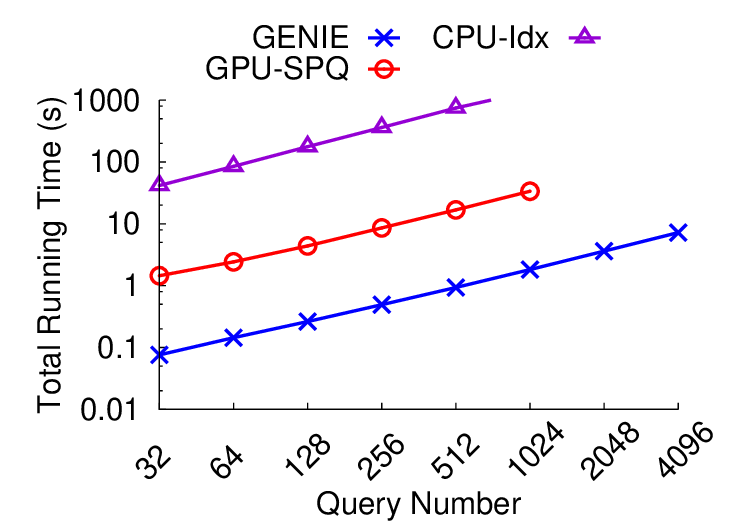}  \\
 \vspace{-0mm}   (a) OCR & \hspace{-0.7cm}    (b) SIFT &   (c) DBLP &   (d) Tweets & \hspace{-0.7cm}   (e) Adult   \\
\end{tabular}
\vspace{-0mm}
\caption{  Total running time for multiple queries.}
\label{fig:exp:total_time_vs_num_query}
\vspace{-0mm}
\end{figure*}

We show the total running time with respect to different numbers of queries in Figure \ref{fig:exp:total_time_vs_num_query} (y-axis is log-scaled). 
Our method outperforms GPU-SPQ by more than one order of magnitude, and it can achieve more than two orders of magnitude over GPU-SPQ and AppGram for sequence search. 
Furthermore, GPU-SPQ can only run less than 256 queries in parallel (except for Adult dataset) for one batch process, but GENIE can support more than 1000 queries in parallel.

As we can see from Figure \ref{fig:exp:total_time_vs_num_query}, GENIE can also outperform GPU-LSH about one order of magnitude. The running time of GPU-LSH is relatively stable with varying numbers of queries. This is because GPU-LSH uses one thread to process one query, thus, GPU-LSH achieves its best performance when there are 1024 queries (which is the maximum number of threads per block on the GPU). 
Note that we only use 1M data points for GPU-LSH on OCR dataset.   

\begin{figure*}[tb]
\centering
\begin{tabular}{ccccc}
  \hspace{-5mm}\includegraphics[width=0.2\textwidth]{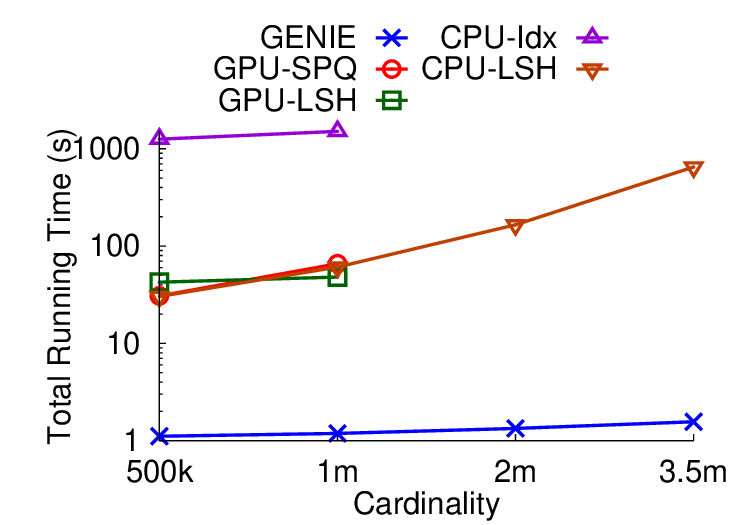} &
     \hspace{-5mm}\includegraphics[width=0.2\textwidth]{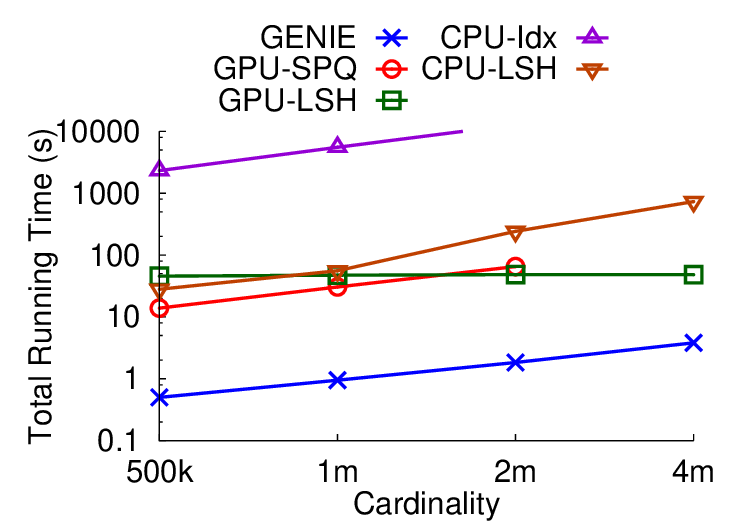}  &
      \hspace{-5mm}\includegraphics[width=0.2\textwidth]{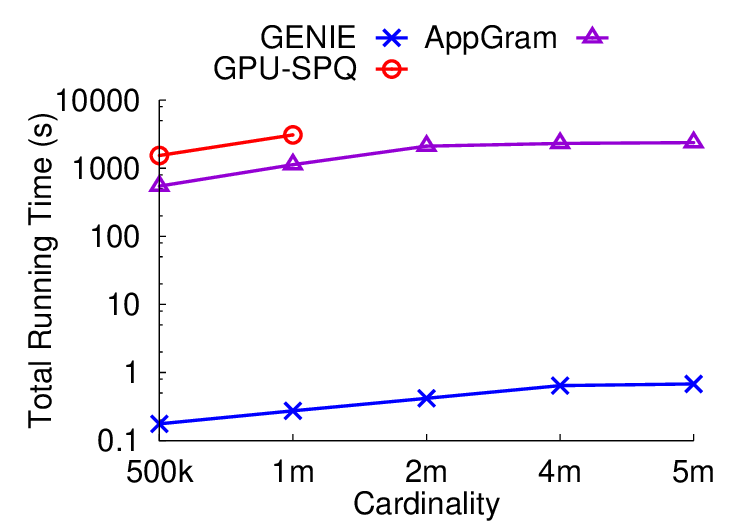}  &
      \hspace{-5mm}\includegraphics[width=0.2\textwidth]{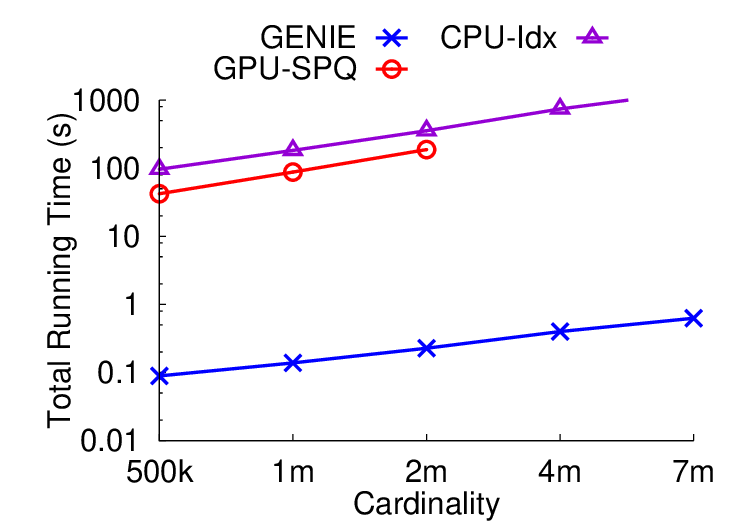}  &
     \hspace{-5mm}\includegraphics[width=0.2\textwidth]{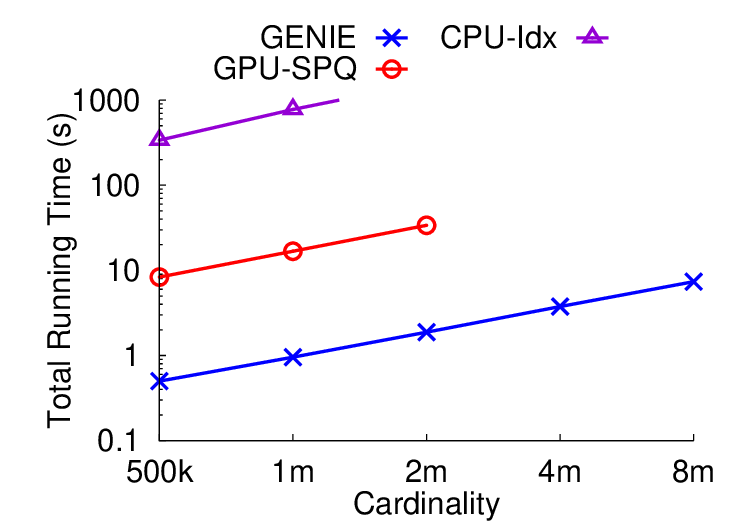} \\
 \vspace{-0mm} \vspace{-1mm}  (a) OCR & \hspace{-0.7cm}    (b) SIFT &   (c) DBLP & (d) Tweets & \hspace{-0.7cm}   (e) Adult   \\
\end{tabular}
\vspace{-0mm}
\caption{  Varying data size for multiple queries (The query number is 512). 
}
\label{fig:exp:total_time_vs_num_data}
\vspace{-0mm}
\end{figure*}

Fig. \ref{fig:exp:total_time_vs_num_data} conveys the running time of GENIE and its competitors with varying numbers of data points for each dataset. 
Since most of the competitors cannot run 1024 queries for one batch, we fix the query number as 512 in this experiment. The running time of GENIE is gradually increased with the growth of data size. 
Nevertheless, the running time of GPU-LSH is relatively stable on all datasets with respect to the data size. The possible reason is that GPU-LSH uses many LSH hash tables and LSH hash functions to break the data points into short blocks, therefore, the time for accessing the LSH index on the GPU becomes the main cost of query processing. 

\begin{figure}[htb]
\vspace{-0mm}
\centering
\begin{minipage}{.4\textwidth}
    \centerline{
   \begin{tabular}{ccc}
  \hspace{-5mm}\includegraphics[width=0.95\textwidth]{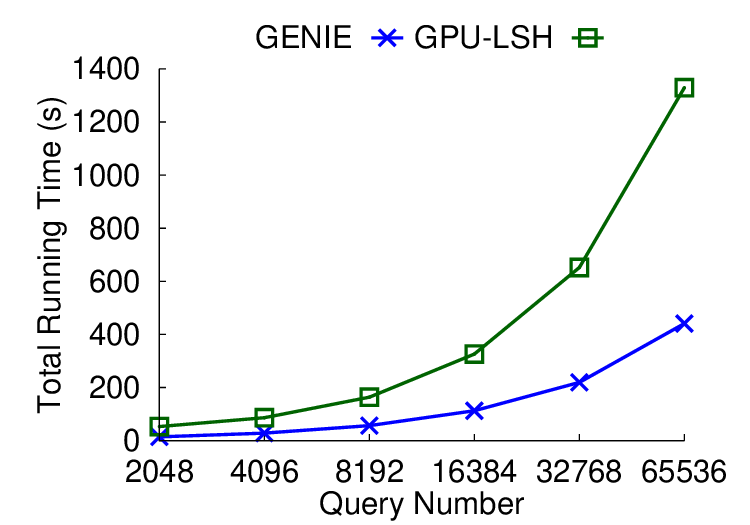}
    \end{tabular}
    }
 \vspace{-0mm}
\caption{   Running time with a large number of queries on SIFT data }
\label{fig:exp:total_time_vs_large_num_query_sift}
\end{minipage}
\vspace{-0mm}
\end{figure}

Figure \ref{fig:exp:total_time_vs_large_num_query_sift} shows the running time of GENIE and GPU-LSH for a larger number (up to 65536) of queries on SIFT data. Though GPU-LSH can support ten thousands of queries per batch, GENIE can also support such large number of queries with breaking query set into several small batches. 
With setting 1024 queries as a batch for GENIE, 
we can see that the time cost of GPU-LSH to process 65536 queries with one batch is 1329 seconds, while GENIE can process the same number of queries (with 64 batches) in 441 seconds. 

\subsubsection{Time profiling} \label{exp:timeprofile}


\begin{table}\small
\vspace{-0mm}
  \caption{\small Time profiling of different stages of GENIE for 1024 queries (the unit of time is $second$).  }\label{tab:profiling}
  \centering
\begin{tabular}{c|c|c|c|c|c|c}\hline
  \hline
  \multicolumn{2}{c|} {Stage} & OCR & SIFT& DBLP& Tweets & Adult\\\hline
  \multicolumn{2}{c|} {Index build}  & 81.39 & 47.73 & 147.34& 12.10 & 1.06\\\hline
  \multicolumn{2}{c|} {Index transfer} & 0.53  & 0.34 & 0.20& 0.088 & 0.011\\\hline
  \multirow{3}{*} {Query} &transfer & 0.015 & 0.018 & 0.0004& 0.0004 & 0.0004 \\\cline{2-7}
  &match & 2.60 & 7.04 & 0.85& 1.19 & 1.82 \\\cline{2-7}
  &select & 0.004 & 0.003 &0.11*  & 0.003 & 0.009 \\  \hline
  \hline
  \multicolumn{7}{c}{*This includes verification time which is the major cost.} \\ \hline
\end{tabular}
\vspace{-0mm}
\end{table}

Table \ref{tab:profiling} shows the time cost for different stages of GENIE. The ``Index-build'' represents the running time to build the inverted index on the CPU. This is an one-time cost, and we do not count it in the query time. The ``Index-transfer'' displays the time cost to transfer the inverted index from the CPU to the GPU. The rows of ``Query'' display the time for similarity search with 1024 queries per batch. The ``Query-transfer'' is the time cost to transfer queries and other information from the CPU to the GPU. The ``Query-select'' contains the time for selecting candidates from c-PQ and sending back the candidates to the CPU (For DBLP data, it also includes the time of verification).
The ``Query-match'' is the time cost for scanning inverted index which dominates the cost for similarity search. 
 This confirms our design choice of using GPU to accelerate this task.

\subsubsection{Experimental study on load balance} \label{sec:exp:lb}
\begin{figure}[htb]
\vspace{-0mm}
\centering
\begin{minipage}{.35\textwidth}
    \centerline{
   \begin{tabular}{ccc}
  \hspace{-5mm}\includegraphics[width=0.95\textwidth]{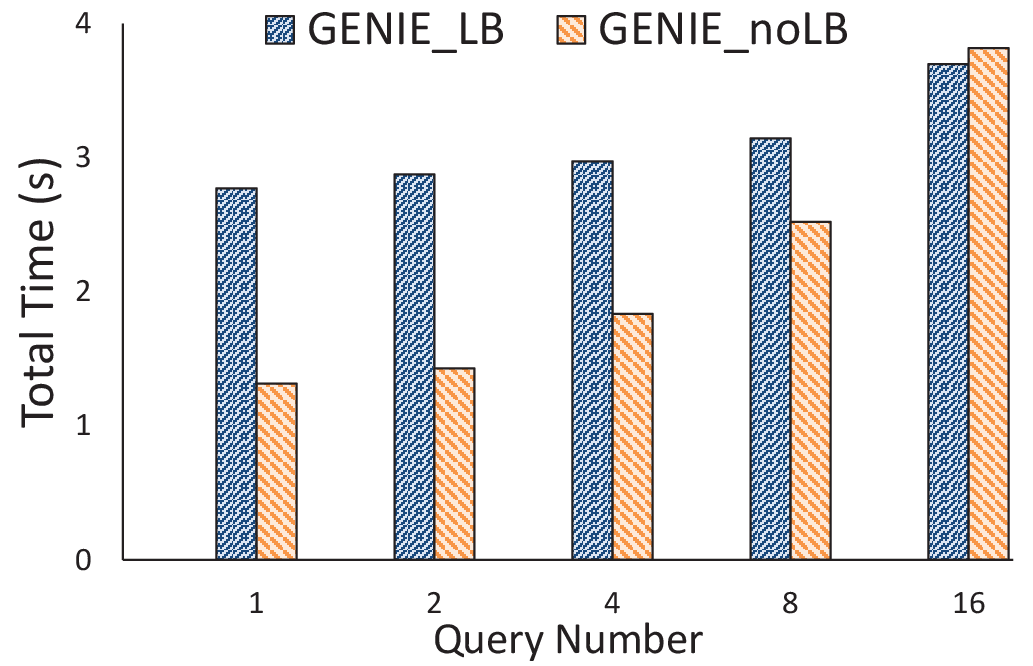}
    \end{tabular}
    }
 \vspace{-0mm}
\caption{   Load balance on Adult data (with 100M data points)}
\label{fig:exp:load_balance}
\end{minipage}
\vspace{-0mm}
\end{figure}

We study the effect of the load balance for GENIE on the Adult dataset which has long postings lists since some its attributes have only a few of categories (e.g. sex). We also duplicate the Adult dataset to 100M points to show the effect more clearly. In this experiment, we exert exact match for all attributes and return the best match candidates to the query. 
Figure \ref{fig:exp:load_balance} illustrates the running time of GENIE with and without load balance by varying the number of queries, where ``GENIE\_LB'' and ``GENIE\_noLB'' represent the running time with and without enabling load balance function respectively.
From Figure \ref{fig:exp:load_balance}, we can see that the load balance function can effectively allocate the workload to different blocks by breaking down the long lists. With increasing of the number of queries, the effect of the load balance is marginally decreased. The reason is that when the number of queries is larger, GENIE has already maximized the possibility for parallel processing by using one block for one postings list.
Besides, since the load balance requires some additional cost to maintain the index, the running time of GENIE with load balance is slightly higher than the one without load balance when the GPU is fully utilized.

\subsubsection{Searching on large data with multiple loadings}\label{sec:sift_multiple_loadings}
If the data set is too large to be processed with limited GPU memory, we adopt a multiple loading method (see Section \ref{sec:multiple_loadings}). Table \ref{tab:sift_multiple_loadings} shows the scalability of GENIE with different data sizes on SIFT\_LARGE dataset. In this experiment, we set the data part for each loading as 6M data points. By resorting to multiple loadings, GENIE can finish the query process for 1024 queries with 25.90 seconds on 36M SIFT data points. It also shows that GENIE with multiple loadings can scale up linearly with the number of data points. Since GPU-LSH cannot handle datasets with larger than 12M points, we estimate the running time of GPU-LSH on 24M and 36M data points with the same multiple loading method but without including index loading and result merge time. We can see that GPU-LSH has almost six times of running time of GENIE for the same dataset.

\begin{table}\small
  \caption{\small Running time of GENIE with multiple loadings on SIFT\_LARGE dataset for 1024 queries(unit: second)}\label{tab:sift_multiple_loadings}
  \centering
\begin{tabular}{c|c|c|c|c}  \hline
  \hline
  SIFT\_LARGE&6M&12M&24M&36M \\\hline
  GENIE & 4.32 & 8.62 & 17.26 & 25.90\\\hline
  GPU-LSH & 47.71	& 48.82 & (97.64)*& (146.46)*\\\hline
  CPU-LSH &2817 &5644 & 12333&20197 \\\hline
\hline
 \multicolumn{5}{c}{*It is estimated with multiple loading method.} \\ \hline
\end{tabular}
\vspace{-0mm}
\end{table}

\begin{table}\small
\vspace{-0mm}
  \caption{\small \small Extra running time cost of GENIE with multiple loadings with the same setting of Table \ref{tab:sift_multiple_loadings}(unit: second)}\label{tab:sift_multiple_loadings_extra}
  \centering
\begin{tabular}{c|c|c|c|c}  \hline
  \hline
  SIFT\_LARGE&6M&12M&24M&36M \\\hline
  Index transfer & 0.50 & 1.03 & 2.01 & 3.02\\\hline
  Result merge & 0 & 0.04&0.10 & 0.22\\\hline
  GENIE\_total & 4.32 & 8.62 & 17.26 & 25.90\\\hline
\hline
\end{tabular}
\vspace{-0mm}
\end{table}

GENIE with multiple loadings has two extra steps: 1) index loading: swapping index of each data part into the GPU memory and 2) result merging: merging the query results of each data part to obtain the final result. The running time cost of each extra step is shown in  Table \ref{tab:sift_multiple_loadings_extra}. We can see that the extra steps only take a small portion of the total time cost.


\subsubsection{Discussion}\label{exp:genie_discussion}

Here we give a brief discussion about the root causes that GENIE outperforms other methods.
It is not surprising that GENIE can outperform all the CPU-based algorithms like CPU-LSH, CPI-Idx and AppGram significantly. 

The key reason that GENIE can outperform GPU-LSH is due to the novel structure of c-PQ for candidate selection. The main bottleneck of GPU-LSH is to select top-k candidates from the candidate set generated by LSH, whose method essentially is to sort all candidates which is an expensive computation. Meanwhile, c-PQ can obtain the candidates by scanning the small Hash Table once. 

 GENIE outperforms GPU-SPQ with similar reasons. GPU-SPQ uses a k-selection algorithm on the GPU which requires multiple iterations to scan all candidates. Whereas GENIE only needs to scan the Hash Table once whose size (which is $O(k*AT)$) is much smaller than the candidate set of GPU-SPQ. 

\subsection{Effectiveness of c-PQ}\label{sec:exp:cpq}


\begin{figure*}[tb]
 \vspace{-0mm}
\centering
\begin{tabular}{ccccc}
  \hspace{-5mm}\includegraphics[width=0.2\textwidth]{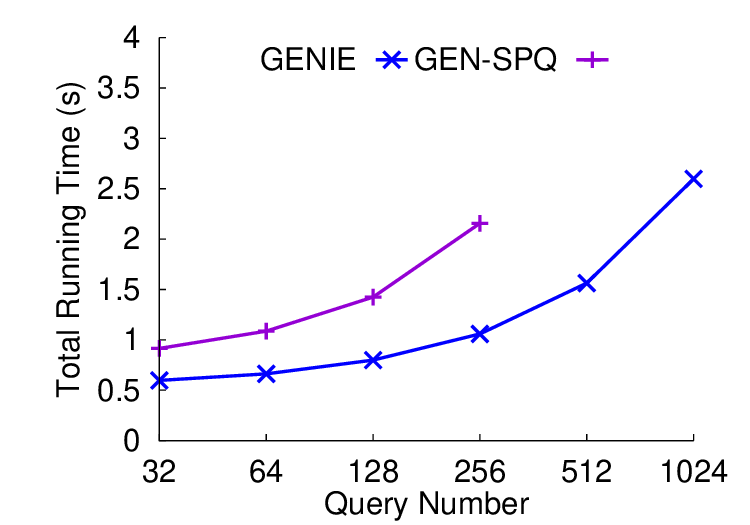} &
     \hspace{-5mm}\includegraphics[width=0.2\textwidth]{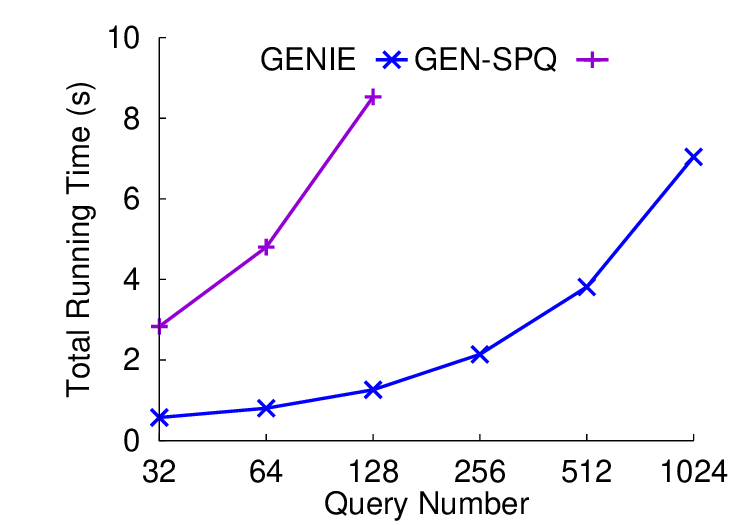}  & 
  \hspace{-5mm}\includegraphics[width=0.2\textwidth]{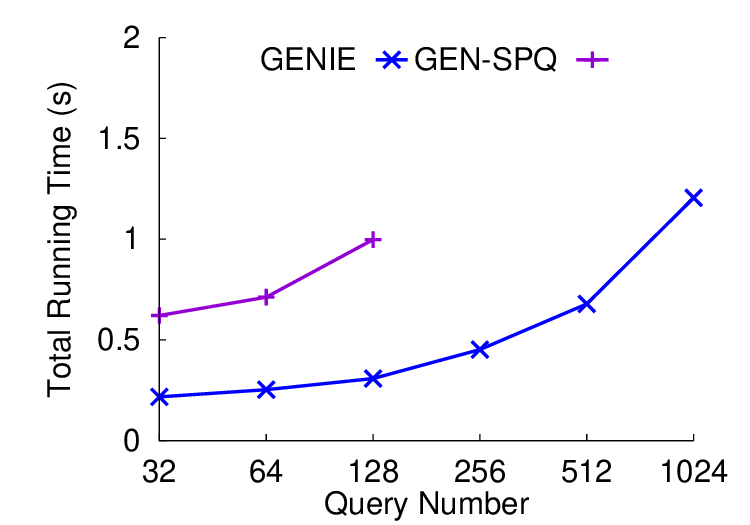}  &
      \hspace{-5mm}\includegraphics[width=0.2\textwidth]{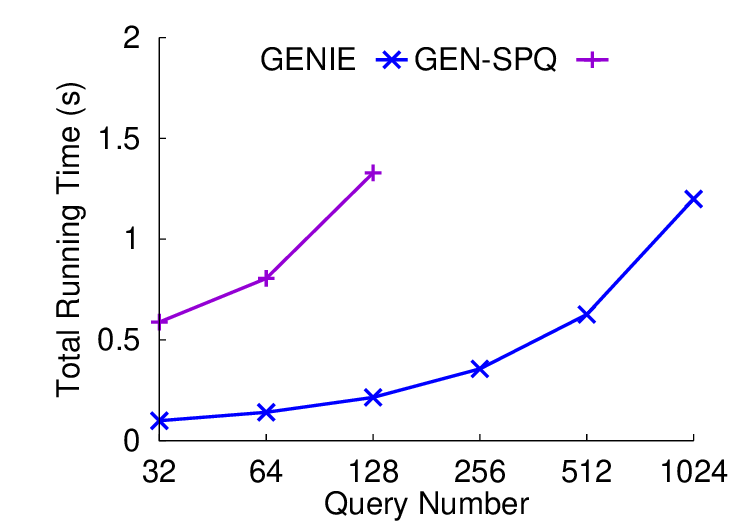}  &
     \hspace{-5mm}\includegraphics[width=0.2\textwidth]{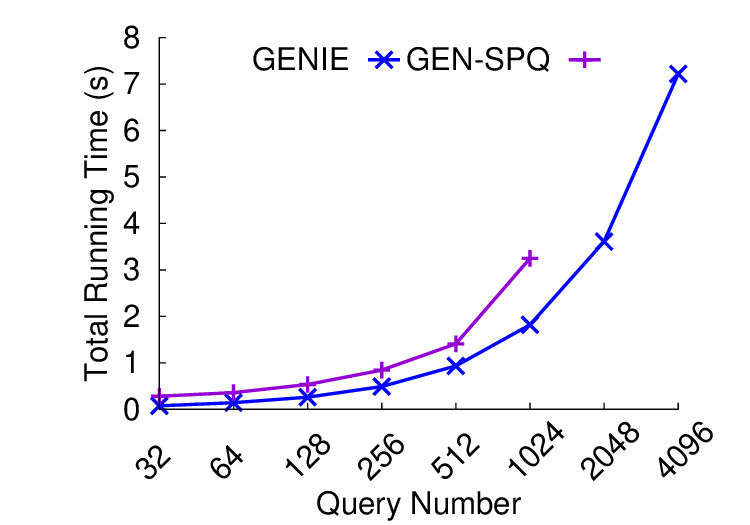} \\
 \vspace{-0mm}   (a) OCR &     (b) SIFT  &   (c) DBLP &   (d) Tweets & \hspace{-0.7cm}   (e) Adult   \\
\end{tabular}
\vspace{-0mm}
\caption{  The effectiveness of c-PQ.}
\label{fig:exp:genie_vs_genie_noch}
 \vspace{-0mm}
\end{figure*}

c-PQ can significantly reduce the memory requirement and the running time cost for GENIE. In Fig. \ref{fig:exp:genie_vs_genie_noch},  GEN-SPQ represents the running time of GENIE without c-PQ. 
We can see that, when the number of queries is the same, with the help of c-PQ the running time of GENIE decreases significantly since it avoids selecting candidates from a large Count Table. 


From Table \ref{tab:query_memory} we can see that GENIE reduces memory consumption per query to $1/5\thicksim1/10$ of the one of GEN-SPQ. To evaluate the memory consumption, with fixing the data size, we gradually increase the number of queries to find the maximum query number handled by our GPU, then we calculate the memory consumption per query by using 12 GB to divide the maximum query number.

\begin{table}\small
\vspace{-0mm}
  \caption{\small Memory consumption per query (unit: MB)}\label{tab:query_memory}
  \centering
\begin{tabular}{c|c|c|c|c|c}  \hline
  \hline
  dataset&OCR&SIFT&DBLP&Tweets&Adult \\\hline
  GENIE & 4.9 & 6.5 & 7.2 & 10.2& 1.4\\\hline
  GEN-SPQ & 41.0  & 49.1 &47.3 & 61.4& 8.8\\\hline
  \hline
\end{tabular}
\vspace{-0mm}
\end{table}


\subsection{Effectiveness of GENIE}
In this section, we evaluate the effectiveness of GENIE under the LSH scheme and the SA scheme. 
\subsubsection{ANN Search with GENIE}\label{sec:exp:ann}
Here we discuss the quality of the ANN search with GENIE as well as the parameter setting for GPU-LSH.
An evaluation metric for the ANN search is \emph{approximation ratio}, which is defined as how many times farther a reported neighbor is compared to the real nearest neighbor.
Formally, for a query point $q$, let $\{p_1,p_2,...,p_k\}$ be the ANN search results sorted in an ascending order of their $l_\mathfrak{p}$ normal distances to $q$. Let $\{p_1^*,p_2^*,...,p_k^*\}$ be the true $k$NNs sorted in an ascending order of their distances to $q$. Then the approximation ratio is formally defined as:
\begin{equation}
\vspace{-0mm}
\frac{1}{k}\sum_{i=1}^k \frac{ \| p_i-q\|_\mathfrak{p} }{\| p_i^* - q\|_\mathfrak{p}  }
\vspace{-0mm}
\end{equation}

When evaluating the running time, we set the parameters of GPU-LSH and GENIE to ensure that they have similar approximation ratio.

In the experiment evaluation (especially for running time) for ANN search  on the SIFT data set, we configure the parameters of GPU-LSH and GENIE to ensure that they have similar approximation ratio. For ANN search of GENIE, we set the number of hash functions as 237 which is determined by setting $\epsilon=\delta = 0.06$ (as discussed in Section \ref{sec:search:errbnd}). Another parameter for ANN search in high dimensional space is the bucket width (of Eqn. \ref{eqn:plsh}). According to the method discussed in the original paper of E2LSH \cite{datar2004locality}, we divide the whole hash domain into $67$ buckets. The setting of bucket width is a trade-off between time and accuracy: larger bucket width can improve the approximation ratio, but requires longer running time for similarity search.

For GPU-LSH, there are two important parameters: the number of hash functions per hash table and the number of hash tables. With fixing the number of hash tables, we find GPU-LSH has the minimal running time to achieve the same approximation ratio when the number of hash functions is 32. After fixing the number of hash functions as 32, we gradually increase the number of hash tables for GPU-LSH, until it can achieve approximation ratio similar to that of ANN search by GENIE ($k$ is fixed as 100). The number of hash tables is set as 700.


%

\begin{figure}[htb]
\vspace{-0mm}
\centering
\begin{minipage}{.4\textwidth}
    \centerline{
   \begin{tabular}{ccc}
  \hspace{-5mm}\includegraphics[width=0.95\textwidth]{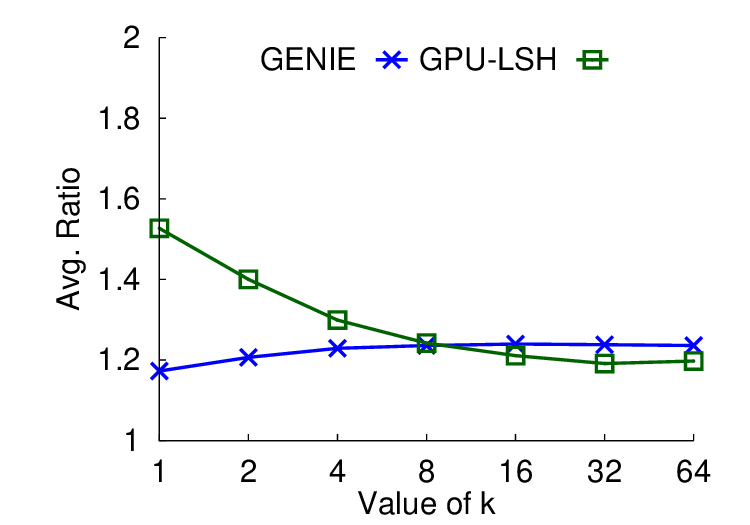}
    \end{tabular}
    }
 \vspace{-0mm}
\caption{   Approximation ratio v.s. value of k on SIFT data }
\label{fig:exp:ratio_vs_k}
\end{minipage}
\vspace{-0mm}
\end{figure}

%

Figure \ref{fig:exp:ratio_vs_k} shows the approximation ratio of GPU-LSH and GENIE, where GENIE has stable approximation ratio with varying $k$; whereas GPU-LSH has large approximation ratio when $k$ is small. 
The increase of approximation ratio of GPU-LSH with smaller $k$ is a common phenomenon which also appears in some pervious LSH methods like \cite{sun2014srs}. The reason is that these methods usually adopt some early-stop conditions, thus with larger $k$ they can access more points to improve the approximation ratio \cite{sun2014srs}. 

\begin{table}\small
 \vspace{-0mm}
  \caption{\small Prediction result of OCR data by 1NN }\label{tab:pred_ocr}
  \centering
\begin{tabular}{c|c|c|c|c}  \hline
  \hline
  method & precision & recall & F1-score & accuracy \\\hline
  GENIE & 0.8446 & 0.8348 &   0.8356 & 0.8374 \\\hline
  GPU-LSH &  0.7875 &  0.7730 & 0.7738 & 0.7783 \\\hline
  \hline
\end{tabular}
 \vspace{-0mm}
\end{table}

We use a similar method to determine the parameters for GPU-LSH and GENIE on the OCR dataset.
For ANN search in Laplacian kernel space by GENIE, except parameters $\epsilon$ and $\delta$ which determine the number of hash functions, another parameter is the kernel width $\sigma$ of the Laplacian kernel $k(x,y)=exp(-\|x-y\|_\mathfrak{1}/\sigma)$. We random sample 10K points from the dataset and use their mean of paired $l_1$ distance as the kernel width. This is a common method to determine the kernel width for kernel function introduced by Jaakkola et al.\cite{jaakkola1999using}.
GPU-LSH uses GPU's constant memory to store random vectors for LSH. Thus, the number of hash functions on OCR data cannot be larger than 8 otherwise the constant memory overflows. We use only 1M data points from the OCR dataset for GPU-LSH since it cannot work on a larger dataset. We increase the number of hash tables (with fixing the number of hash functions as 8) until it can achieve similar prediction performance as GENIE as reported in Table \ref{tab:pred_ocr}  where the number of hash tables for GPU-LSH is set as 100. Note that the prediction performance of GPU-LSH is slightly worse than the one of GENIE. It is possible to improve the performance of GPU-LSH by increasing the number of hash tables, which will dramatically increase the running time for queries. 

\subsubsection{Sequence similarity search with GENIE}\label{sec:exp:seq_search}
After finishing the search on GENIE, we can identify that some queries do not obtain real top-$k$ results (see Section \ref{sec:search:sequence}). 
Table \ref{tab:sequence_accu}  shows the percent of the queries obtaining correct top-1 search results
 with one round of the search process for 1024 queries. As we see from Table \ref{tab:sequence_accu}, with less than 10\% modification, GENIE can return correct results for almost all queries. Even with 40\% modification, GENIE can still return correct results for more than 95\% of queries. A typical application of such sequence similarity search is (typing) sequence error correction, where GENIE can return the most similar words (within minimum edit distance in a database)  for the 1024 queries with a latency time of 1 second (as shown in Figure \ref{fig:exp:total_time_vs_num_query} and Table \ref{tab:profiling}). Note that the one second is the whole latency time for 1024 queries including index transfer, query transfer, matching in GENIE and verification. AppGram may do this job with better accuracy, but it has much larger latency. A discussion on multiple round search and how to set  $\mathbf{K}$ is discussed in next section. 

\begin{table}\small
 \vspace{-0mm}
  \caption{\small Accuracy of top-1 search on DBLP dataset on GENIE (query length=40 and $\mathbf{K}=32$)}\label{tab:sequence_accu}
  \centering
\begin{tabular}{c|c|c|c|c}  \hline
  \hline
  Percent of modified & 0.1 & 0.2 & 0.3 & 0.4 \\\hline
  Accuracy & 1.0 & 0.999 &   0.995 & 0.954 \\\hline
Latency tiem (s)& 1.3 & 1.2 &  1.1 & 1.2 \\\hline
  \hline
\end{tabular}
 \vspace{-0mm}
\end{table}

\subsubsection{Sequence search with varying K}\label{sec:apx_sequence_accu_k}
{
\begin{table}\small
\vspace{-0mm}
  \caption{\small Accuracy and running time cost of GENIE for sequence search with varying $\mathbf{K}$ (query length=40).}\label{tab:sequence_accu_k}
  \centering
\begin{tabular}{c|c|c|c|c||c|c|c|c}  \hline
  \hline
  modified&0.1&0.2&0.3&0.4&0.1&0.2&0.3&0.4\\\hline
  $\mathbf{K}$&\multicolumn{4}{c||}{accuracy}&\multicolumn{4}{c}{time (seconds)} \\\hline
  8&0.999&0.998&0.979&0.908&1.3&0.98&1.0&0.96 \\\hline
 16&1.0&0.999&0.986&0.934&1.3&1.2&1.1&1.0 \\\hline
 32&1.0&0.999&0.995&0.954&1.3&1.2&1.1&1.2 \\\hline
  64&1.0&1.0&0.996&0.970&1.3&1.2&1.2&1.3 \\\hline
  128&1.0&1.0&0.997&0.975&1.4&1.3&1.4&1.6 \\\hline
  256&1.0&1.0&0.998&0.983&1.4&1.3&1.7&2.1 \\\hline
\end{tabular}
\vspace{-0mm}
\end{table}

}

With the help of Lemma \ref{lemma:top-$k$}, we can know whether the real $k$NN search result are returned. If users are keen on finding the true top-$k$ result, one possible solution is to repeat the search process by GENIE with larger $\mathbf{K}$. Table \ref{tab:sequence_accu_k} shows the accuracy and the running time of GENIE with varying the $\mathbf{K}$.  A method to find the real top-$k$ search result is to use a sequence of $\mathbf{K}=[8,16,32,64,128,...]$ as a multiple iteration method to find the results, but the running time cost is also high (which is accumulated with the time cost of different  $\mathbf{K}$). As we can see from Table \ref{tab:sequence_accu_k}, when the $\mathbf{K}$ is large enough (like $\mathbf{K}=64$), the increase of the accuracy becomes small. Our suggestion is to set a relatively large $\mathbf{K}$ which balances the running time and the accuracy. In our experiment, we set $\mathbf{K}=32$.

\section{Related work}\label{sec:related}


\subsection{Similarity search on different data}

Due to the ``curse of dimensionality'', spatial index methods 
provide little improvement over a linear scan algorithm when dimensionality is high. It is often unnecessary to find the exact nearest neighbour, leading to the development of LSH scheme for ANN search in high dimensional space \cite{indyk1998approximate}. 
We refer interested readers to a survey of LSH \cite{wang2014hashing}.%


The similarity between Sets, feature sketches and geometries is often known only implicitly, thus the computable kernel function is adopted for similarity search. 
To scale up similarity search on these data, the LSH-based ANN search in such kernel spaces has drawn considerable attention. 
Charikar \cite{charikar2002similarity} investigates several LSH families for kernelized similarity search. 
Wang et al. \cite{wang2014hashing} give a good survey about the LSH scheme on different data types. GENIE can support the similarity search in an arbitrary kernel space if it has an LSH scheme.


There is a wealth of literature concerning similarity search on complex structured data, and a large number of indexes have been devised. Many of them adopt the SA scheme \cite{aparicio2002whole,she2004shotgun} which splits the data objects into small sub-units and builds inverted index on these sub-units. Different data types are broken down into different types of sub-units. Examples include words for documents, 
n-grams for sequences \cite{wang2013efficient}, binary branches for trees \cite{yang2005similarity} and stars for graphs \cite{Graph:Yan:SIGMOD:2005}. 
Sometimes, a verification step is necessary to compute the real distance (e.g. edit distance) between the candidates and the query object \cite{wang2013efficient,Graph:Yan:SIGMOD:2005,yang2005similarity}.

\subsection{Parallelizing similarity search}
Parallelism can be adopted to improve the throughput for similarity search. 
There are also some proposed index structures on graphs and trees that can be parallelized \cite{tatikonda2010hashing,Graph:Yan:SIGMOD:2005}. However, indexes tailored to special data types cannot be easily extended to support other data types.

There are a few GPU-based methods for ANN search using LSH. Pan et al. \cite{pan2011fast,pan2012bi} propose a searching method on the GPU using a bi-level LSH algorithm, 
which specially designed for ANN search in the $l_\mathfrak{p}$ space. However GENIE can generally support LSH for ANN search under various similarity measures.

%
%

\subsection{Data structures and models on the GPU}
There are some works \cite{ding2009using,ao2011efficient} about inverted index on the GPU to design specialized algorithms for accelerating some important operations on search engines. An
inverted-like index on the GPU is also studied for continuous time series search \cite{zhou2015smiler}.
To the extent of our knowledge, there is no existing work on
inverted index framework for generic similarity search on the GPU.

Tree-based data structures are also investigated to utilize the parallel capability of the GPU.
Parallel accessing to the B-tree \cite{he2009relational} or R-tree \cite{luo2012parallel} index on the GPU memory is studied for efficiently handling query processing. 
Some GPU systems for key-value store are also studied \cite{zhang2015mega,hetherington2012characterizing}.

Researchers also propose general models to guide the query optimization on the GPU. For example, a GPU-based map-reduce framework is proposed to ease the development of data analysis tasks on the GPU \cite{he2008mars}.  
But none of them is specially designed for similarity search.

\subsection{Frequent item finding algorithm }
Some previous work related to Count Priority Queue (c-PQ) of GENIE is frequent item finding algorithm \cite{cormode2009finding}, which can be categorized as counter-based approach (like LossyCounting \cite{manku2002approximate} and SpaceSaving \cite{metwally2006integrated}) and sketch-based approach (like Count-Min \cite{cormode2005improved} and Count-Sketch \cite{charikar2004finding}). However, both approaches are approximation methods, whereas c-PQ can return the exact top-$k$ frequent count items.  Moreover, several frequent item finding algorithms (like SpaceSaving and Count-Min) require priority queue-like operations (e.g. finding the minimum from array), making them nontrivial be implemented on the GPU. 



\section{Conclusion}\label{sec:conclusion}


In this paper, we presented GENIE, a generic inverted index framework, which tries to reduce the programmer burden by providing a generic fashion for similarity search on the GPU for data types and similarity measures that can be modeled in the match-count model. Several techniques are devised to improve the parallelism and scaling out of the GPU, like $c$-PQ to reduce the time cost and the multiple loading method for handling large datasets. We also proved that GENIE can support $\tau$-ANN search for any similarity measure satisfying the LSH scheme, as well as similarity search on original data with the SA scheme.
In particular, we investigated how to use GENIE to support ANN search in kernel space and in high dimensional space, similarity search on sequence data and document data, and top-$k$ selection on relational data.
Extensive experiments on various datasets demonstrate the efficiency and effectiveness of GENIE.


\section*{Acknowledgments}
This research was carried out at the SeSaMe Centre. It is
supported by the Singapore NRF under its IRC@SG Funding Initiative and administered by the IDMPO.
The work by H. V. Jagadish was partially supported by the US National Science Foundation under Grants IIS-1250880 and IIS-1741022.

{\scriptsize
\vspace{-2mm}
\bibliographystyle{IEEEtran}
\bibliography{IEEEabrv,geniegpu}  
}
\newpage
\appendix

\section{k-selection on the GPU as Priority Queue}\label{sec:kselection}

We use a $k$-selection algorithm from a array to construct a priority queue-like data structure on the GPU, which is named as SPQ in the paper. To extract the top-$k$ object from an array, we modify a GPU-based bucket-selection algorithm \cite{alabi2012fast} for this purpose. Figure \ref{fig:bucketSelection} shows an example for such selection process.

\begin{figure}[htb]
\vspace{-0mm}
\centerline{
\includegraphics[width=0.35\textwidth]{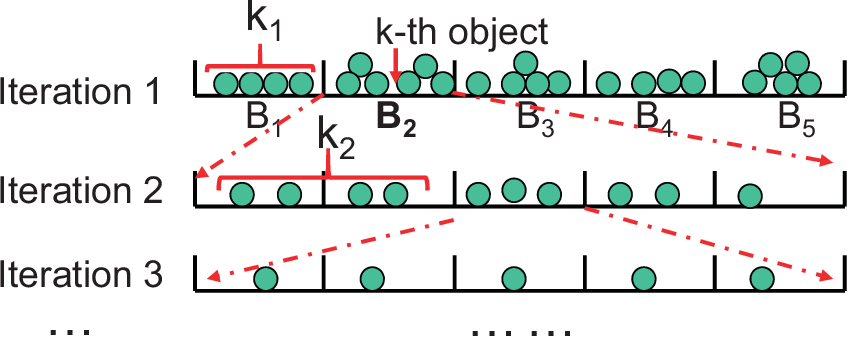}
}
\vspace{-0mm}
 \caption{Example for bucket k-selection. The iteration repeats until $k=k_1+k_2+...+k_t$.}
\label{fig:bucketSelection}
\vspace{-0mm}
\end{figure}

The algorithm has multiple iterations, and each iteration has three main steps. Step (1): we use a partition formulae $bucket_{id} = \lfloor(count - min)/(max - min) * bucket\_num)\rfloor$ to assign every objects into buckets. In Figure \ref{fig:bucketSelection}, all the objects in a hash table are assigned bucket $B_1$ to $B_5$. Step (2): We then check the bucket containing $k-th$ object. In Figure \ref{fig:bucketSelection}, $B_2$ is the selected bucket in Iteration 1. Step (3): we save all the objects before the selected bucket, and denote the number of saved objects as $k_i$ (e.g. $k_1$ in Iteration 1). Then we repeat Step (1)-(3) on the objects of the selected buckets until we find all the top-$k$ objects (i.e. $k=k_1+k_2+...+k_t$). From the algorithm, we can see that this method requires to explicitly store the object ids to assign them into buckets. In regard to multiple queries, we use one block to handle one hash table to support parallel selection in the GPU-based implementation. In our experiment, the algorithm usually finishes in two or three iterations.

\end{document}